\documentclass{article}

\usepackage[preprint]{neurips_2026}

\usepackage[utf8]{inputenc} 
\usepackage[T1]{fontenc}    
\usepackage{hyperref}       
\usepackage{url}            
\usepackage{booktabs}       
\usepackage{amsfonts}       
\usepackage{nicefrac}       
\usepackage{microtype}      
\usepackage{xcolor}         
\usepackage{amsmath,amsfonts,amssymb,dsfont,bm,amsthm}

\usepackage{subcaption}

\newtheorem{lemma}{Lemma}

\title{Missing data and cluster graphs: cluster-level missingness vs variable-level missingness}

\author{%
  Willow Scott \\
  Sorbonne Université, INSERM, \\
  Institut Pierre Louis d’Epidémiologie\\ 
  et de Santé Publique,\\ 
  F75012, Paris, France\\
  \texttt{willow.scott@inserm.fr} \\
  \And
  Eugenio Valdano \\
  Sorbonne Université, INSERM, \\
  Institut Pierre Louis d’Epidémiologie\\ 
  et de Santé Publique,\\ 
  F75012, Paris, France\\
  \texttt{eugenio.valdano@inserm.fr} \\
  \AND
  Charles K. Assaad \\
  Sorbonne Université, INSERM, \\
  Institut Pierre Louis d’Epidémiologie\\ 
  et de Santé Publique,\\ 
  F75012, Paris, France\\
  \texttt{charles.assaad@inserm.fr} \\
}

\usepackage{graphicx} 
\usepackage{ciphod}

\usepackage{mathbbol}

\usepackage{amsmath}
\usepackage{amsthm}

\usepackage{tikz}

\newtheorem{definition}{Definition}

\newtheorem{theorem}{Theorem}
\newtheorem{proposition}{Proposition}

\newtheorem{example}{Example}

\begin{document}

\maketitle

\begin{abstract}
Missing data is pervasive in many scientific domains such as public health, environmental science, and the social sciences. Recoverability from missing data is typically studied using fully specified variable-level missingness models despite that, in many applications, only coarse structural information is available, for instance when variables are grouped into clusters due to limited knowledge or interpretability reasons. In this paper, we investigate recoverability from such abstract representations. We introduce two classes of cluster-based missingness graphs: the m-C-DMG, which retains variable-specific missingness indicators, and the cm-C-DMG, which aggregates missingness mechanisms at the cluster level. We formalize the notion of compatibility between these abstract graphs and underlying variable-level missingness models, and study how this abstraction affects the recoverability of probabilistic and causal queries. In particular, we give graphical conditions of recovering the joint distribution as well as graphical conditions of recovering a macro causal effect. Overall, our results clarify when cluster-level missingness information is sufficient for valid inference, and when finer-grained modeling is necessary.
\end{abstract}

\section{Introduction}

Missing data are ubiquitous in many scientific domains, including public health, environmental science, and the social sciences. Understanding when meaningful quantities can be recovered from incomplete data is therefore a fundamental problem. Classical missing data theory addresses this question at the level of fully specified variable-level models, where the missingness mechanism of each variable is explicitly represented \cite{Rubin_1976, Mohan_2013,Mohan_2014,Mohan_2021}. In such settings, graphical models~\citep{Pearl_2000} provide a powerful framework to characterize recoverability of joint, conditional, and causal quantities.

In many modern applications, however, the underlying data-generating mechanisms are only partially known or are intentionally abstracted~\citep{Peters_2013,Inoue_2022,Assaad_2023,Anand_2025}. Variables are thus often grouped into clusters, either for  interpretability or because only aggregated structural information is available. For example, clusters are necessary to deal with uncertainty around the relationships between sexual practices and use of prevention in studies on predictors of risk of acquiring Human Immunodeficiency Virus (HIV). In a conceptual graph, sexual behavior is treated as a cluster of measures such as condom use and number of sexual partners. Table~\ref{table:HIV} exhibits missingness in a synthetic dataset of clusters representing Sexual Behaviors, Mental Health and HIV Status. As is the case in most real-world data, missingness may occur in any of the variables and for a variety of reasons. One must thus reason about recoverability from abstract representations of missingness mechanisms rather than from fully specified variable-level graphs.

In this paper, we introduce and study two such abstractions: the \emph{m-C-DMG} and the \emph{cm-C-DMG}. The m-C-DMG retains variable-specific missingness indicators while representing causal relations between variables at the cluster level. The cm-C-DMG is a coarser abstraction that merges missingness indicators within each cluster into a single vertex. While both representations encode missingness mechanisms at a higher level of abstraction, they differ in the amount of information they preserve, raising the question of how this impacts recoverability.

Our goal is to understand how recoverability of queries --~such as joint distributions and causal effects~-- depends on the level of abstraction. We formalize the notion of compatibility between abstract graphs and variable-level missingness models and study when recoverability can be inferred from these abstractions. 
We provide results for the recovery of joint distributions and as well as results for the recovery of specific type of causal effects, known as macro causal effects.
Overall, this work contributes to the theoretical understanding of missing data problems under structural abstraction.

The remainder of the paper is organized as follows. Section~\ref{sec:background} presents the background and preliminaries. Section~\ref{sec:missing_cluster} introduces the two proposed abstractions, which differ in their level of granularity, and compares them. Section~\ref{sec:joint_distribution} provides conditions for recovering joint distributions from these abstractions. Section~\ref{sec:causal_effect} gives conditions for recovering macro causal effects. Finally, Section~\ref{sec:discussion} concludes the paper with a discussion.

\begin{table}[h]
\centering
\caption{Example of synthetic individual-level public-health data with missing values. Only the cluster-level graph between HIV status, mental health, and sexual behavior is assumed to be known, since the relationships within each cluster are unspecified.}
\begin{tabular}{c|cc|cc|c}
\hline
ID 
& \texttt{Anxiety scale} & \texttt{Depression scale} 
& \texttt{Condom use} & \texttt{\# of partners} 
& \texttt{HIV status} \\
\hline
1 & Moderate    & None         & NA               & 3   & NA \\
2 & None        & Minimal      & Never            & 1   & NA \\
3 & None        & Severe       & NA               & 5   & Negative \\
4 & Minimal     & None         & Almost always    & 21  & Acute \\
5 & Severe      & Minimal      & Sometimes        & NA  & Negative \\
6 & None        & None         & Always           & 1   & Chronic \\
\hline
\end{tabular}
\label{table:HIV}
\end{table}

\section{Background and preliminaries}
\label{sec:background}
Consider a structural causal model~\citep{Pearl_2000} that induces a causal graph, also known as an acyclic directed mixed graph (ADMG) and a distribution of interest $\Pr(\mathbb{V})$ over a set of variables $\mathbb{V}$. 
Let $\mathbb{V}$ be partitioned into $\mathbb{V}_o$ and $\mathbb{V}_m$, where
$\mathbb{V}_o \subseteq \mathbb{V}$ denotes the set of variables observed for all
records in the population, and $\mathbb{V}_m \subseteq \mathbb{V}$ denotes the set
of variables missing in at least one record. We refer to  $\mathbb{V}_m\cup \mathbb{V}_o$ as the set of substantive variables. A variable $V_i$ is said to be
\emph{fully observed} if $V_i \in \mathbb{V}_o$, and \emph{partially observed} if
$V_i \in \mathbb{V}_m$.
For each partially observed variable $V_i \in \mathbb{V}_m$, we associate two
additional variables: a missingness indicator $R_{V_i}$ and a proxy variable
$V_i^\ast$. The intended semantics is that $R_{V_i}$ indicates the missingness status of  $V_i$ in the dataset, and    is what we actually observe of $V_i$ in the available dataset, \ie, 
\begin{equation}
\label{eq:from_missing_to_not_missing}
V_i^\ast =
\begin{cases}
V_i & \text{if } R_{V_i} = 0,\\
NA & \text{if } R_{V_i} = 1.
\end{cases}
\end{equation}
In what follows we call $\Pr(\mathbb{V}^o, \mathbb{V}^*, R)$ the manifest distribution. It represent the distribution of what we observe in addition to the missingness indicator variables.

Missingness is typically categorized depending on the process that causes it ~\citep{Rubin_1976}:

\begin{itemize}
    \item Missing Completely at Random (MCAR): the probability of missingness in $V_i$ is completely independent of both the value of $V_i$ and all other variables in the dataset:
    $\indep{R_{V_i}}{(V_o \cup V_m)}{}$.
    For example, a folder from a file cabinet organized in alphabetical order is lost.
    
    \item Missing at Random (MAR): the probability of missingness may depend on observed variables, but not on the missing value itself:
    $\indepc{R_{V_i}}{V_m}{V_o}{}$.
    For example, a person is less likely to disclose their condom use if they have a sexual partner living with HIV.
    
    \item Missing Not at Random (MNAR): the probability of missingness depends on unobserved data. For example, not using condoms decreases the likelihood of being willing to share one's own test results for sexually transmitted infections, and condom usage is not observed.
\end{itemize}

Given a manifest distribution $\Pr(\mathbb{V}^*, \mathbb{V}^o, \mathbb{R})$, a query $Q$ is said to be \emph{recoverable} if we can compute a consistent estimate of $Q$ as if no data
were missing.
\ie, if it is decomposable into
terms that contain the missingness mechanism $R_{V_i} = 1$ of
every partially observed variable $V_i$ that appears in $Q$~\citep{Mohan_2013}.

For MCAR data, all $V_i \in \mathbb{V}_m$, $\indep{R_{V_i}}{(\mathbb{V}_o \cup \mathbb{V}_m)}{}$. Therefore, the joint probability is recoverable trivially: $\Pr(\mathbb{V}) = \Pr(\mathbb{V} \mid R_{\mathbb{V}_m}=0) = \Pr(\mathbb{V}_o, {\mathbb{V}_m^\ast} \mid {R_{\mathbb{V}_m}} = 0)$. Since conditional on ${R_{V_i}}=0$, $V_i^\ast$ is observed, considering only complete cases (listwise deletion) allows recovery of the joint probability $\Pr(V_i)$. For MAR data, all $V_i\in \mathbb{V}_m$, $\indepc{R_{V_i}}{\mathbb{V}_m}{\mathbb{V}_o}{}$. Thus, $\Pr(\mathbb{V}) = \Pr(\mathbb{V}_m \mid \mathbb{V}_o)\Pr(\mathbb{V}_o) = \Pr(\mathbb{V}_m \mid \mathbb{V}_o, R_{\mathbb{V}_m} = 0)\Pr(\mathbb{V}_o)=\Pr(\mathbb{V}_m^* \mid \mathbb{V}_o, R_{\mathbb{V}_m} = 0)\Pr(\mathbb{V}_o)$ ~\citep{Mohan_2013}, making it possible to estimate the distribution of $\Pr(\mathbb{V})$.

Recoverability is more complicated under MNAR, leading many to make the assumption that data is MAR or to use listwise or pairwise deletion, leading to biased results ~\citep{Wothke_2000}. Going back to the synthetic data in Table~\ref{table:HIV}, participant mental health impacts willingness to report sexual behaviors. HIV status is also  missing for some, though it arises from inconsistent reporting depending on participant region (unobserved). Although MAR would allow recovery of information on sexual behaviors, such an assumption would bias estimation of the distribution of HIV status. 

There are existing methods of recovery under MNAR missingness mechanisms. m-ADMGs (otherwise called \textit{Missingness graphs} or \textit{m-graphs}) employ understanding of the causal structure behind missingness to recover queries ~\citep{Mohan_2013,Mohan_2014,Mohan_2021}.
An m-ADMG is a graph
$
\mathcal{G} = (\mathbb{V}, \mathbb{R}, \mathbb{V}^\ast, \mathbb{E}^\to, \mathbb{E}^{\longdashleftrightarrow}),
$
where $\mathbb{V}$ is a finite set of substantive variables. For each
$X \in \mathbb{V}$, $R_X \in \mathbb{R}$ denotes the missingness indicator
associated with $X$, with
$
\mathbb{R}=\{R_X:X\in\mathbb{V}\}.
$
Similarly, $X^\ast \in \mathbb{V}^\ast$ denotes the observed-data proxy of $X$,
with
$
\mathbb{V}^\ast=\{X^\ast:X\in\mathbb{V}\}.
$
The set $\mathbb{E}^\to$ contains directed edges, where an edge $X\to Y$
represents a direct causal effect of $X$ on $Y$. The set
$\mathbb{E}^{\longdashleftrightarrow}$ contains bidirected edges, where an edge
$X\longdashleftrightarrow Y$ represents the presence of an unobserved common cause of
$X$ and $Y$. Proxy vertices have no children and no bidirected edges. For each
$X\in\mathbb{V}$, the proxy variable $X^\ast$ is determined by $X$ and $R_X$:
$X^\ast=X$ when $R_X=0$, and $X^\ast=\mathrm{NA}$ when $R_X=1$.
Importantly, an m-ADMG provides a qualitative representation of the causal
relations in an underlying structural causal model (like an ADMG), as well as
missingness mechanisms and proxies of partially observed
variables. Like an ADMG, it is acyclic and may contain both directed and
bidirected edges.
It is also important to note that a single m-ADMG can be compatible with many manifest distributions.
Compatibility means that the manifest distribution satisfies the conditional independence and missingness constraints encoded by the m-ADMG. We can now redefine the notion of recoverability with respect to  m-ADMGs.

\begin{definition}[Recoverability in m-ADMG, \cite{Mohan_2013}]
Given a $\mathcal{G}^m$, and a target query $Q$ defined on the variables in $\mathbb{V}$ , $Q$ is said to be recoverable in $\mathcal{G}^m$ if there exists an algorithm that produces a consistent estimate
of $Q$ for every manifest distribution  $\Pr(\mathbb{V}^*, \mathbb{V}^o, \mathbb{R})$ that is (1) compatible with $\mathcal{G}^m$ and (2) strictly positive over complete cases, \ie, $\Pr(\mathbb{V}^*, \mathbb{V}^o, \mathbb{R}) > 0$.
\end{definition}

Fully specified causal graphs are frequently unobtainable in epidemiological settings, where the interplay of biological, social, and behavioral factors lead to densely populated graphs with uncertainty around the exact relationships between variables. Partially specified graphs allow researchers to employ graphical causal methods, despite ambiguity in parts of these systems. In these cases, experts can sometimes specify a cluster directed mixed graph (C-DMG), also known as cluster graphs~\citep{Anand_2023,Ferreira_2025}, which is an abstract causal graphs whose vertices represent clusters of variables rather than individual variables. A C-DMG may contain directed edges and bidirected edges between clusters. A directed edge $C_i \to C_j$ indicates that at least one variable in cluster $C_i$ has a direct causal effect on at least one variable in cluster $C_j$. A bidirected edge $C_i \longdashleftrightarrow C_j$ indicates the presence of latent confounding between variables in the two clusters. Unlike ADMGs and m-ADMGs, C-DMGs may contain directed cycles, since feedback can arise at the cluster level even when the underlying variable-level graph is acyclic.
For illustration, Figure~\ref{fig:two_admgs_one_cdmg} presents a C-DMG with two of its compatible ADMGs.
However, missingness indicators have not yet been incorporated into cluster graphs, limiting their applicability in incomplete data settings.

We end this section by introducing some graphical terminology. Two vertices are said to be \emph{neighbors} if they are adjacent, that is, if they are connected by an edge, regardless of its type or orientation. We use the standard notions of parents, children, ancestors, and descendants induced by directed paths. A non-endpoint vertex on a path is called a \emph{collider} on that path if the two adjacent edges on the path both have arrowheads pointing into it. For example, in the path $ A \to B \leftarrow C, $ the vertex $B$ is a collider. In contrast, $B$ is a non-collider in paths of the form $ A \to B \to C, A \leftarrow B \to C,  A \leftarrow B \leftarrow C. $

\begin{figure}
    \centering
\begin{subfigure}{0.35\textwidth}
\centering
\begin{tikzpicture}[
    every node/.style={draw, circle, align=center, minimum width=0.2cm, minimum height=0.2cm},
    font=\footnotesize, scale=0.8
]

\node (z1) at (-2,0) {$Z_1$};
\node (z2) at (-2,2) {$Z_2$};
\node (x1) at (0, 0) {$X_1$};
\node (x2) at (0, 2) {$X_2$};
\node (y1) at (2, 0) {$Y_1$};
\node (y2) at (2, 2) {$Y_2$};

\draw[->,>=latex] (z1) -- (z2);
\draw[->,>=latex] (z1) -- (x1);
\draw[->,>=latex] (z1) -- (x2);
\draw[->,>=latex] (x1) to (x2);
\draw[->,>=latex] (x1) to (z2);
\draw[->,>=latex] (x2) to (y1);
\draw[->,>=latex] (y1) to (y2);

\end{tikzpicture}
\caption{}
\end{subfigure}
\hfill 
\begin{subfigure}{0.35\textwidth}
\centering
\begin{tikzpicture}[
    every node/.style={draw, circle, align=center, minimum width=0.2cm, minimum height=0.2cm},
    font=\footnotesize, scale=0.8
]

\node (z1) at (-2,0) {$Z_1$};
\node (z2) at (-2,2) {$Z_2$};
\node (x1) at (0, 0) {$X_1$};
\node (x2) at (0, 2) {$X_2$};
\node (y1) at (2, 0) {$Y_1$};
\node (y2) at (2, 2) {$Y_2$};

\draw[->,>=latex] (z1) -- (z2);
\draw[->,>=latex] (z1) -- (x1);
\draw[->,>=latex] (z2) -- (x1);
\draw[->,>=latex] (x1) to (x2);
\draw[->,>=latex] (x1) to (y1);
\draw[->,>=latex] (x2) to (y2);
\draw[->,>=latex] (x2) to (z1);
\draw[->,>=latex] (y1) to (y2);

\end{tikzpicture}
\caption{}
\end{subfigure}
\hfill
\begin{subfigure}{0.25\textwidth}
\centering
\begin{tikzpicture}[
    every node/.style={draw, circle, align=center, minimum width=0.2cm, minimum height=0.2cm},
    font=\footnotesize, scale=0.8
]

\node (Z) at (-2,0) {$C_{\mathbb Z}$};
\node (X) at (0, 0) {$C_{\mathbb X}$};
\node (Y) at (2, 0) {$C_{\mathbb Y}$};


\draw[->,>=latex] (X) -- (Y);

 \begin{scope}[transform canvas={yshift=-.25em}]
     \draw[->,>=latex] (Z) -- (X);
     \end{scope}
 \begin{scope}[transform canvas={yshift=.25em}]
     \draw[<-,>=latex] (Z) -- (X);
 \end{scope}


\end{tikzpicture}
\caption{}
\end{subfigure}
    \caption{Two ADMGs in (a) and (b) compatible with the same C-DMG in (c).}
    \label{fig:two_admgs_one_cdmg}
\end{figure}

\section{Missingness in cluster graphs}
\label{sec:missing_cluster}

A central objective of this paper is to investigate how missingness mechanisms can be incorporated into cluster graphs. To this end, we introduce two abstract missingness representations, which differ in the level of detail they retain about missingness indicators:

\begin{itemize}
    \item \textbf{missingness-C-DMG (m-C-DMG)}: an abstract graph in which each variable inside a cluster is associated with its own missingness indicator;
    \item \textbf{clustered-missingness-C-DMG (cm-C-DMG)}: a coarser abstract graph in which a single missingness vertex represents the missingness mechanism of the whole cluster.
\end{itemize}

This distinction is crucial because the two abstractions do not encode the same amount of information. The \emph{m-C-DMG} preserves variable-specific missingness information within clusters, whereas the \emph{cm-C-DMG} merges these distinctions into a single cluster-level missingness vertex. As a consequence, the \emph{cm-C-DMG} is generally less informative than the \emph{m-C-DMG}.

\begin{definition}[m-C-DMG]
Let $\mathbb{C}=\{C_1,\dots,C_K\}$ be a partition of a finite set of substantive
variables $\mathbb{V}$. An \emph{m-C-DMG} is a graph
$
\mathcal{G}^{\mathbb{C},m}
=
(\mathbb{C},\mathbb{R},\mathbb{V}^\ast,\mathbb{E}^\to,\mathbb{E}^{{\longdashleftrightarrow}}),
$
where $\mathbb{C}$ is the set of cluster vertices,
$\mathbb{R}=\{R_X:X\in\mathbb{V}\}$ is the set of variable-level missingness
indicators, and $\mathbb{V}^\ast=\{X^\ast:X\in\mathbb{V}\}$ is the set of proxy
variables. The edges are defined as follows:
\begin{itemize}
    \item The set $\mathbb{E}^\to$ contains directed edges between elements of
$\mathbb{C}\cup\mathbb{R}\cup\mathbb{V}^\ast$. A directed edge
$C_{\mathbb X}\to C_{\mathbb Y}$ between two cluster vertices means that there
exist variables $X_i\in C_{\mathbb X}$ and $Y_j\in C_{\mathbb Y}$ such that
$X_i$ has a direct causal effect on $Y_j$ at the variable level. Similarly, a
directed edge $C_{\mathbb X}\to R_Y$ means that there exists
$X_i\in C_{\mathbb X}$ such that $X_i$ has a direct causal effect on the
missingness indicator $R_Y$.
\item The set $\mathbb{E}^{{\longdashleftrightarrow}}$ contains bidirected edges between
elements of $\mathbb{C}\cup\mathbb{R}$. A bidirected edge
$C_{\mathbb X}\longdashleftrightarrow C_{\mathbb Y}$ means that there exist variables
$X_i\in C_{\mathbb X}$ and $Y_j\in C_{\mathbb Y}$ that share an unobserved
common cause at the variable level.
\end{itemize}

\end{definition}


In some applications, specifying an m-C-DMG may be difficult, because it requires distinguishing the missingness mechanisms of individual variables and their relations to the rest of the graph. It can therefore be useful to cluster missingness indicators and work with a coarser representation. However, allowing an arbitrary partition of the missingness indicators would make it difficult to track which missingness mechanism corresponds to which substantive variables. For this reason, we focus on a structured form of clustering in which missingness indicators are merged only when their associated variables already belong to the same substantive cluster. We call the m-CDMG where each missingness indicator $R_{C_{\mathbb{X}}}$ represents the missingness mechanism of the an entire cluster $C_{\mathbb{X}}$ as a cm-C-DMG.

\begin{definition}[cm-C-DMG]
Let $\mathbb{C}=\{C_1,\dots,C_K\}$ be a partition of a finite set of substantive
variables $\mathbb{V}$. A \emph{cm-C-DMG} is a graph
$
\mathcal{G}^{\mathbb{C},cm}
=
(\mathbb{C},\mathbb{R}_{\mathbb{C}},\mathbb{V}^\ast,
\mathbb{E}^\to,\mathbb{E}^{\longdashleftrightarrow}),
$
where $\mathbb{C}$ is the set of cluster vertices,
$\mathbb{R}_{\mathbb{C}}=\{R_C:C\in\mathbb{C}\}$ is the set of cluster-level
missingness indicators, and
$\mathbb{V}^\ast=\{X^\ast:X\in\mathbb{V}\}$ is the set of variable-level proxy
vertices.
For each cluster $C\in\mathbb{C}$, the node $R_C$ represents the missingness
indicators associated with variables in $C$, namely $\{R_X:X\in C\}$. Missingness
indicators associated with variables in different clusters are not merged. The edges are defined as follows:

\begin{itemize}
    \item The set $\mathbb{E}^\to$ contains directed edges between elements of
$\mathbb{C}\cup\mathbb{R}_{\mathbb{C}}\cup\mathbb{V}^\ast$. A directed edge
$C_{\mathbb X}\to C_{\mathbb Y}$ means that there exist variables
$X_i\in C_{\mathbb X}$ and $Y_j\in C_{\mathbb Y}$ such that $X_i$ has a direct
causal effect on $Y_j$ at the variable level. A directed edge
$C_{\mathbb X}\to R_{C_{\mathbb Y}}$ means that there exist
$X_i\in C_{\mathbb X}$ and $Y_j\in C_{\mathbb Y}$ such that $X_i$ has a direct
causal effect on $R_{Y_j}$ at the variable level.

\item The set $\mathbb{E}^{\longdashleftrightarrow}$ contains bidirected edges between
elements of $\mathbb{C}\cup\mathbb{R}_{\mathbb{C}}$. A bidirected edge
$C_{\mathbb X}\longdashleftrightarrow C_{\mathbb Y}$ means that there exist variables
$X_i\in C_{\mathbb X}$ and $Y_j\in C_{\mathbb Y}$ that share an unobserved
common cause at the variable level.  
\end{itemize}

\end{definition}


Unlike m-ADMGs and similarly to C-DMGs, m-C-DMGs and cm-C-DMGs may contain cycles.

In an m-CDMG or cm-C-DMG, for every $X \in \mathbb{V}$, the proxy vertex $X^\ast$ has exactly two parents:
    the cluster vertex $C_{\mathbb{X}}$ containing $X$, and the missingness indicator.
    It is also important to highlight that proxy vertices have no children and no bidirected edges.
    The only difference between the two definitions is that in the former missingness indicator represents the missingness mechanism of the single variable $X$, whereas in the latter a missingness indicator represents an entire cluster.

\begin{figure}
    \centering
\begin{subfigure}{0.45\textwidth}
\begin{tikzpicture}[
    every node/.style={draw, circle, align=center, minimum width=0.2cm, minimum height=0.2cm},
    font=\footnotesize, scale=0.8
]

\node (Z) at (-2,0) {$C_{\mathbb Z}$};
\node (X) at (0, 0) {$C_{\mathbb X}$};
\node (Y) at (2, 0) {$C_{\mathbb Y}$};
\node (Y1*) at (-1, -1) {$Y_1^*$};
\node (Y2*) at (-1, -2.5) {$Y_2^*$};
\node (X1*) at (0, 1.5) {$X_1^*$};
\node (X2*) at (1.5, 1.5) {$X_2^*$};

\node (ry1) at (-4,0) {$R_{Y_1}$};
\node (ry2) at (-4,-1.5) {$R_{Y_2}$};
\node (rx1) at (-1.5,1.5) {$R_{X_1}$};
\node (rx2) at (3,1.5) {$R_{X_2}$};

\draw[->,>=latex] (X) -- (Y);

 \begin{scope}[transform canvas={yshift=-.25em}]
     \draw[->,>=latex] (Z) -- (X);
     \end{scope}
 \begin{scope}[transform canvas={yshift=.25em}]
     \draw[<-,>=latex] (Z) -- (X);
 \end{scope}	

\draw[->,>=latex] (Y) -- (Y1*);
\draw[->,>=latex] (ry1) -- (Y1*);

\draw[->,>=latex] (Y) -- (Y2*);
\draw[->,>=latex] (ry2) -- (Y2*);

\draw[->,>=latex] (X) -- (X1*);
\draw[->,>=latex] (rx1) -- (X1*);
\draw[->,>=latex] (Z) -- (rx1);

\draw[->,>=latex] (X) -- (X2*);
\draw[->,>=latex] (rx2) -- (X2*);

\draw[<->,>=latex, dashed] (ry1) to [out=80,in=100, looseness=1] (Z);

\draw[->,>=latex] (X) to[out=-70,in=-20,looseness=1.8] (X);
\draw[->,>=latex] (Y) to[out=70,in=20,looseness=1.8] (Y);
\draw[->,>=latex] (Z) to[out=70,in=20,looseness=1.8] (Z);

\end{tikzpicture}
\caption{}
\end{subfigure}
\hfill 
\begin{subfigure}{0.45\textwidth}
\begin{tikzpicture}[
    every node/.style={draw, circle, align=center, minimum width=0.2cm, minimum height=0.2cm},
    font=\footnotesize, scale=0.8
]

\node (Z) at (-2,0) {$C_{\mathbb Z}$};
\node (X) at (0, 0) {$C_{\mathbb X}$};
\node (Y) at (2, 0) {$C_{\mathbb Y}$};
\node (Y*) at (-1, -1) {$C_{\mathbb Y}^*$};
\node (X*) at (0, 1.5) {$C_{\mathbb X}^*$};

\node (ry) at (-4,0) {$R_{C_{\mathbb{Y}}}$};
\node (rx) at (-1.5,1.5) {$R_{C_{\mathbb{X}}}$};

\draw[->,>=latex] (X) -- (Y);

 \begin{scope}[transform canvas={yshift=-.25em}]
     \draw[->,>=latex] (Z) -- (X);
     \end{scope}
 \begin{scope}[transform canvas={yshift=.25em}]
     \draw[<-,>=latex] (Z) -- (X);
 \end{scope}	

\draw[->,>=latex] (Y) -- (Y*);
\draw[->,>=latex] (ry) -- (Y*);

\draw[->,>=latex] (X) -- (X*);
\draw[->,>=latex] (rx) -- (X*);
\draw[->,>=latex] (Z) -- (rx);

\draw[<->,>=latex, dashed] (ry) to [out=80,in=100, looseness=1] (Z);

\draw[->,>=latex] (X) to[out=70,in=20,looseness=1.8] (X);
\draw[->,>=latex] (Y) to[out=70,in=20,looseness=1.8] (Y);
\draw[->,>=latex] (Z) to[out=70,in=20,looseness=1.8] (Z);
\end{tikzpicture}





\caption{}
\end{subfigure}
    \caption{A m-C-DMG (a) and a cm-C-DMG (b). Figure 2b is the conceptual graph of the clinical example in Table 1, with the following clusters: $C_\mathbb{X}$: Sexual Behaviors, $C_\mathbb{Z}$: Mental Health, $C_\mathbb{Y}$: HIV Status.}
    \label{fig:m_clusters}
\end{figure}



\subsection{Compatibility classes}

Let $\compatible{\mathcal{G}^{\mathbb{C}, *m}}$ denote the set of variable-level m-ADMGs compatible with an abstract graph $\mathcal{G}_{\mathbb{C}}$.
Given an m-C-DMG $\mathcal{G}^{\mathbb{C}, m}$, let $\pi(\mathcal{G}^{\mathbb{C}, m})$ denote the associated cm-C-DMG obtained by merging, within each cluster, all variable-level missingness indicators into a single cluster-level missingness vertex.
Under this natural abstraction map, every variable-level m-ADMG compatible with $\mathcal{G}^{\mathbb{C}, m}$ is also compatible with $\pi(\mathcal{G}^{\mathbb{C}, m})$. Intuitively, this is because the cm-C-DMG forgets distinctions that are preserved by the m-C-DMG. Therefore, the compatibility class of the m-C-DMG is contained in that of the associated cm-C-DMG:

\begin{proposition}[Inclusion of compatibility classes]
\label{proposition:compatibility_class}
Let $\mathcal{G}^{\mathbb{C}, m}$ be an m-C-DMG and let $\mathcal{G}^{cm}=\pi(\mathcal{G}^{\mathbb{C}, m})$ be the associated cm-C-DMG. Then
$\compatible{\mathcal{G}^{\mathbb{C}, m}} \subseteq \compatible{\mathcal{G}^{\mathbb{C}, cm}}.$
Moreover, the inclusion is strict whenever at least one cluster contains two variables with distinct missingness mechanisms.
\end{proposition}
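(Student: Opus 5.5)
The plan is to make compatibility precise first, since the paper only defines it informally. An m-ADMG $\mathcal{G}^m$ over $\mathbb{V}\cup\mathbb{R}\cup\mathbb{V}^\ast$ is compatible with an abstract graph when two things hold. First, every edge of $\mathcal{G}^m$ is witnessed by the corresponding abstract edge under the map sending each variable to its cluster, each $R_X$ to its abstract missingness vertex, and each proxy to itself. Second, every abstract edge is witnessed by at least one variable-level edge. For the m-C-DMG the missingness vertices are kept, so the map is $\phi_m(R_X)=R_X$. For the cm-C-DMG the map is $\phi_{cm}(R_X)=R_{C}$, where $C$ is the cluster containing $X$. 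Hence $\phi_{cm}=\pi\circ\phi_m$, where $\pi$ is the merging map of the statement applied to vertices.

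\textbf{Inclusion.} Take $\mathcal{G}^m\in\compatible{\mathcal{G}^{\mathbb{C}, m}}$. I would check edge types one at a time.

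\emph{Soundness.} A variable-level edge $X_i\to Y_j$ gives $C_{\mathbb X}\to C_{\mathbb Y}$ in $\mathcal{G}^{\mathbb{C},m}$, and that edge is unchanged by $\pi$. An edge $X_i\to R_{Y_j}$ gives $C_{\mathbb X}\to R_{Y_j}$, which $\pi$ sends to $C_{\mathbb X}\to R_{C_{\mathbb Y}}$. This is exactly the semantics of such an edge in the cm-C-DMG definition. Edges $R\to R$, edges $R\to C$, bidirected edges involving $R$, and proxy edges $C\to X^\ast$, $R_X\to X^\ast$ map the same way.

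\emph{Completeness.} Every edge of $\pi(\mathcal{G}^{\mathbb{C},m})$ is by construction the image of some edge of $\mathcal{G}^{\mathbb{C},m}$. That edge is witnessed in $\mathcal{G}^m$, and the same witness serves for the image edge.

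Together, soundness and completeness give $\mathcal{G}^m\in\compatible{\mathcal{G}^{\mathbb{C}, cm}}$. The acyclicity and proxy-structure constraints concern only $\mathcal{G}^m$ itself, so they are inherited unchanged.

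\textbf{Strictness.} Suppose a cluster $C$ contains $X_1,X_2$ whose mechanisms differ in $\mathcal{G}^{\mathbb{C},m}$. Concretely, there is a vertex $W$ with $W\to R_{X_1}$ (or a bidirected edge) but no corresponding edge into $R_{X_2}$. After merging, $\pi(\mathcal{G}^{\mathbb{C},m})$ only records $W\to R_C$.

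Now take any $\mathcal{G}^m$ compatible with $\mathcal{G}^{\mathbb{C},m}$; this class is nonempty by the standing assumption that abstract graphs are realizable. Modify it to $\mathcal{G}'$ by adding an edge from the relevant variable in $W$ to $R_{X_2}$. If needed, also delete the edges into $R_{X_1}$, provided $R_{X_2}$ now supplies the witness for $W\to R_C$.

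The modified $\mathcal{G}'$ has the following properties.
\begin{itemize}
\item It still maps soundly and completely onto $\pi(\mathcal{G}^{\mathbb{C},m})$, because the image of every edge is unchanged.
\item It is not compatible with $\mathcal{G}^{\mathbb{C},m}$, since it contains an edge $W\to R_{X_2}$ that has no abstract counterpart.
\end{itemize}

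\textbf{Main obstacle.} The delicate step is showing that $\mathcal{G}'$ is still a valid m-ADMG, i.e.\ that it stays acyclic. Adding $W\to R_{X_2}$ could create a directed cycle if $R_{X_2}$ is an ancestor of $W$. I would first choose the added edge to originate from a substantive variable and use the paper's convention that missingness indicators are not ancestors of substantive variables. If that convention is not available, I would instead pick the witness in $W$ so that it is not a descendant of $R_{X_2}$. When $W$ is itself a missingness vertex, such a choice is possible by a topological order on $\mathcal{G}^m$, reversing the roles of $X_1$ and $X_2$ if necessary.

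I would also need to be explicit about what "distinct missingness mechanisms" means. I would take it to mean different sets of abstract neighbours of $R_{X_1}$ and $R_{X_2}$, which is exactly what this construction requires.
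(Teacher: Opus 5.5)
Your inclusion argument is correct and is the paper's edge-by-edge mapping. You also check that every edge of $\pi(\mathcal{G}^{\mathbb{C},m})$ is still witnessed, which the paper leaves implicit.

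\textbf{The gap is in strictness, at the step you flagged.} Adding or moving a single edge into the other indicator can be impossible without creating a directed cycle, and neither of your remedies fixes this. The convention that indicators are never ancestors of substantive variables is not in the paper, which allows directed edges among all of $\mathbb{C}\cup\mathbb{R}\cup\mathbb{V}^\ast$. Even granting it, chains of indicators defeat the topological-order and role-reversal fix.

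Here is a concrete case. Take $C=\{X_1,X_2\}$, two singleton clusters $\{A\}$ and $\{B\}$, and the m-C-DMG whose only non-proxy edges are $R_{X_2}\to R_A\to R_B\to R_{X_1}$.
\begin{itemize}
\item The non-proxy neighbours of $R_{X_1}$ and $R_{X_2}$ differ, so the mechanisms are distinct even in your sense. Every compatible m-ADMG has exactly these non-proxy edges.
\item Your construction adds $R_B\to R_{X_2}$, or with roles reversed $R_{X_1}\to R_A$. Each closes a directed cycle, and deleting the original edge does not remove that cycle.
\item The opposite orientations $R_{X_2}\to R_B$ and $R_A\to R_{X_1}$ are not available either. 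Their images $R_C\to R_B$ and $R_A\to R_C$ are not edges of $\pi(\mathcal{G}^{\mathbb{C},m})$.
\end{itemize}
Yet the inclusion is strict here. The graph $R_{X_1}\to R_A\to R_B\to R_{X_2}$ is compatible with $\pi(\mathcal{G}^{\mathbb{C},m})$ and not with $\mathcal{G}^{\mathbb{C},m}$. Reaching it requires changing both incident edges at once.

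\textbf{The missing idea is to relabel rather than perform edge surgery.} Take any compatible $\mathcal{G}^m$ and apply the vertex permutation $\sigma$ exchanging $X_1\leftrightarrow X_2$, $R_{X_1}\leftrightarrow R_{X_2}$ and $X_1^\ast\leftrightarrow X_2^\ast$.
\begin{itemize}
\item Since $\sigma(\mathcal{G}^m)$ is isomorphic to $\mathcal{G}^m$, it is automatically acyclic with the correct proxy structure.
\item Its cm-level image is unchanged. This is because $X_1,X_2$ lie in the same cluster, $\pi$ already identifies $R_{X_1}$ with $R_{X_2}$, and $X_1^\ast,X_2^\ast$ both have parents $C$ and $R_C$ in the cm-C-DMG.
\item Its m-level image is $\mathcal{G}^{\mathbb{C},m}$ with $R_{X_1},R_{X_2}$ (and their proxies) interchanged. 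This differs from $\mathcal{G}^{\mathbb{C},m}$ exactly when the two indicators are not interchangeable.
\end{itemize}

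This also gives the right notion of ``distinct mechanisms'': it must record edge type and orientation and ignore each indicator's own proxy. Your unoriented neighbour sets miss, for example, $V\to R_{X_1}$ together with $R_{X_2}\to V$.

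The relabelling is the general form of the paper's move ``$L$ points to $R_{X_2}$ but not to $R_{X_1}$''. The paper carries that move out only for a single-edge example and never checks acyclicity, so it does not establish the ``whenever'' clause either.
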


The inclusion is strict because different variable-level missingness mechanisms may induce the same cluster-level missingness abstraction. For example, consider a cluster $C=\{A,B\}$ with variable-level missingness indicators $R_A$ and $R_B$. The m-C-DMG can distinguish between situations where $R_A$ depends on one set of variables and $R_B$ depends on another. By contrast, the cm-C-DMG only records that the cluster $C$ is subject to missingness, without retaining which variable-specific indicator is involved. Hence several distinct m-ADMGs compatible with the same cm-C-DMG may fail to be compatible with the corresponding m-C-DMG.

\subsection{Recoverability semantics}

Let $Q$ be a target query. We say that $Q$ is \emph{recoverable} in an abstract graph if $Q$ can be recovered uniformly for every variable-level m-ADMG compatible with the  abstract graph, \ie,
$$
Q \text{ is recoverable in } \mathcal{G}^{\mathbb{C}, m}
\quad \Longleftrightarrow \quad
Q \text{ is recoverable in every } \mathcal{G}\in \compatible{\mathcal{G}^{\mathbb{C}, m}}, 
$$
$$
Q \text{ is recoverable in } \mathcal{G}^{\mathbb{C}, cm}
\quad \Longleftrightarrow \quad
Q \text{ is recoverable in every } \mathcal{G}\in \compatible{\mathcal{G}^{\mathbb{C}, cm}}.
$$




In general, a target quantity \(Q\) is less likely to be recoverable from a cm-C-DMG than from an m-C-DMG, since the former is a coarser abstraction. This is formalized in the following proposition.

\begin{proposition}
\label{proposition:compatibility_recover}
Let \(\mathcal{G}^{m}_{\mathbb{C}}\) be an m-C-DMG and let
\(\mathcal{G}^{\mathbb{C}, cm}=\pi(\mathcal{G}^{m}_{\mathbb{C}})\) be the
associated cm-C-DMG.
Then, for any target query \(Q\),
\[
Q \text{ recoverable in } \mathcal{G}^{\mathbb{C}, cm}
\Rightarrow
Q \text{ recoverable in } \mathcal{G}^{m}_{\mathbb{C}}.
\]
The converse does not hold in general.
\end{proposition}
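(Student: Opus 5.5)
The forward implication follows directly from Proposition~\ref{proposition:compatibility_class} and the universal-quantifier form of the recoverability semantics. Assume $Q$ is recoverable in $\mathcal{G}^{\mathbb{C}, cm}=\pi(\mathcal{G}^{m}_{\mathbb{C}})$. By definition, $Q$ is then recoverable, in the sense of \cite{Mohan_2013}, in every m-ADMG $\mathcal{G}\in\compatible{\mathcal{G}^{\mathbb{C}, cm}}$. Proposition~\ref{proposition:compatibility_class} gives $\compatible{\mathcal{G}^{m}_{\mathbb{C}}}\subseteq\compatible{\mathcal{G}^{\mathbb{C}, cm}}$. Hence every m-ADMG compatible with the m-C-DMG is also compatible with the cm-C-DMG, so $Q$ is recoverable in each of them. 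This is exactly the definition of recoverability in $\mathcal{G}^{m}_{\mathbb{C}}$.

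For the failure of the converse, I will build an explicit counterexample. Take a single cluster $C=\{A,B\}$, or add a second fully observed cluster $C'=\{Z\}$ if needed, and let $Q=\Pr(A,B)$. In the m-C-DMG, put an edge $C\to R_A$ and no parents on $R_B$, with no edges between the indicators. The compatible m-ADMGs then allow $R_A$ to depend on $B$ (or on $A$ and $B$), while $R_B$ stays MCAR.

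I first want to show recoverability in the m-C-DMG, which means checking every compatible m-ADMG. The risk is that a compatible m-ADMG contains $A\to R_A$, which is self-masking, and then $\Pr(A,B)$ is not recoverable. To avoid this, I will refine the example so that the m-C-DMG forces a non-self-masking structure. Use two clusters $C_1=\{A\}$ and $C_2=\{B\}$, each a singleton, together with a third cluster $C_3=\{A',B'\}$. I prefer the simpler route of clusters $C_{\mathbb{X}}=\{X_1,X_2\}$ with $X_1$ fully observed, only $X_2$ missing, and $R_{X_2}$ having parent $C_{\mathbb{Z}}$ for a fully observed cluster $C_{\mathbb{Z}}$. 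In the m-C-DMG, $R_{X_1}$ is absent or trivially constant, so every compatible m-ADMG makes $X_2$ MAR given $\mathbb{Z}$. Then $\Pr(\mathbb{V})=\Pr(X_2\mid \mathbb{Z},X_1,R_{X_2}=0)\Pr(\mathbb{Z},X_1)$, with $X_1$ included if no $X_1\to R_{X_2}$ is needed, and $Q$ is recoverable.

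The main step is the second half of the counterexample: under $\pi$, the single vertex $R_{C_{\mathbb{X}}}$ stands for both indicators. Some compatible m-ADMG in the cm-C-DMG class, with $\mathbb{Z}$ made partially observed if necessary, can then make the missingness of a variable depend on itself, for instance via $X_1\to R_{X_2}$ and $X_2\to R_{X_1}$. In the cm-C-DMG that edge is invisible, because it collapses into $C_{\mathbb{X}}\to R_{C_{\mathbb{X}}}$ only if such an edge exists. I therefore expect the cleanest choice to be an m-C-DMG containing the edge $C_{\mathbb{X}}\to R_{X_2}$, realised as $X_1\to R_{X_2}$, which is benign. After merging, this becomes $C_{\mathbb{X}}\to R_{C_{\mathbb{X}}}$, which is compatible with the m-ADMG $X_1\to R_{X_1}$, a self-masking variable. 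By the known non-recoverability of $\Pr(X)$ under self-masking \cite{Mohan_2013}, $Q$ is not recoverable in that m-ADMG, and therefore not in the cm-C-DMG. Verifying carefully that the original m-C-DMG class contains no self-masking member is the delicate point. It holds because the m-C-DMG edge set fixes which indicator receives the cluster edge, and variable-level compatibility forbids $R_{X_1}$ from having any parent.
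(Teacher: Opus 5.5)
Your forward implication is the same as the paper's: the inclusion of compatibility classes (Proposition~\ref{proposition:compatibility_class}) combined with the ``for every compatible m-ADMG'' semantics. The gap is in the counterexample for the converse. In your final construction the m-C-DMG contains $C_{\mathbb{X}}\to R_{X_2}$, and you say this edge is ``realised as $X_1\to R_{X_2}$, which is benign.'' You do not get to choose the realisation. Recoverability in $\mathcal{G}^{m}_{\mathbb{C}}$ quantifies over \emph{all} compatible m-ADMGs, and $X_2\to R_{X_2}$ is an admissible realisation of that cluster edge. So your claim that the m-C-DMG class ``contains no self-masking member'' is false. Your justification only shows that $R_{X_1}$ has no parents. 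With the query you set up and carry along, the joint ($\Pr(A,B)$, resp.\ $\Pr(\mathbb{V})$), this destroys the separation. In the compatible m-ADMG with $X_2\to R_{X_2}$, a variable is adjacent to its own indicator, so the joint is already non-recoverable in the m-C-DMG. This is the same obstruction you spotted in your first attempt with $C\to R_A$, just moved to the other indicator.

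The missing idea is to choose a query that involves only the protected variable, $Q=\Pr(X_1)$. This is exactly the paper's counterexample: $Q=P(A)$ with $C\to R_B$ and $C\not\to R_A$. In every m-ADMG compatible with the m-C-DMG, $R_{X_1}$ receives no edges and points only to $X_1^\ast$, so it is independent of $X_1$. Hence $\Pr(x_1)=\Pr(x_1^\ast\mid R_{X_1}=0)$ however $C_{\mathbb{X}}\to R_{X_2}$ is realised; a self-masked $X_2$ is harmless for this marginal. After merging, $C_{\mathbb{X}}\to R_{C_{\mathbb{X}}}$ admits $X_1\to R_{X_1}$, and $\Pr(X_1)$ becomes non-recoverable. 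For this to work, $X_1$ must be partially observed, so you must drop your intermediate choice of $X_1$ fully observed with $R_{X_1}$ ``absent.'' That intermediate route fails anyway. With only $C_{\mathbb{Z}}\to R_{X_2}$, the map $\pi$ creates no edge $C_{\mathbb{X}}\to R_{C_{\mathbb{X}}}$, so your edges $X_1\to R_{X_2}$ and $X_2\to R_{X_1}$ are not compatible with the merged graph. That cross pattern is also not self-masking in any case.
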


The above proposition shows that there may exist a cm-C-DMG \(\mathcal{G}^{\mathbb{C}, cm}\) and a target query \(Q\) such that \(Q\) is recoverable in every \(\mathcal{G}^m \in \compatible{\mathcal{G}^{\mathbb{C}, cm}}\), while \(Q\) is not recoverable from \(\mathcal{G}^{\mathbb{C}, cm}\) alone. This follows from the fact that the cm-C-DMG suppresses variable-specific missingness distinctions that may be essential for deriving a recoverability formula. As a result, different variable-level m-ADMGs may collapse to the same cm-C-DMG, even though recoverability relies on information that is no longer explicitly represented in the cluster-level abstraction.

\begin{example}
    Let \(C_{\mathbb{X}}=\{X_1,X_2\}\), and suppose the target query is \(Q=\Pr(X_1)\).
In an m-C-DMG, assume that only the missingness of \(X_2\) may depend on the cluster:
$C_{\mathbb{X}} \rightarrow R_{X_2}$,
and
$C_{\mathbb{X}} \not\rightarrow R_{X_1}$.
Thus, the missingness of \(X_1\) is not affected by variables in $C_{\mathbb{X}}$  and \(\Pr(X_1)\)
may be recoverable from the observed cases of \(X_1\).
After merging \(R_{X_1}\) and \(R_{X_2}\) into the cluster-level missingness vertex \(R_{C_{\mathbb{X}}}\),
the associated cm-C-DMG contains
$C_{\mathbb{X}} \to R_{C_{\mathbb{X}}}$.
This no longer indicates whether the cluster affects the missingness of \(X_1\),
of \(X_2\), or both. Hence the cm-C-DMG is compatible with cases in which the
missingness of \(X_1\) depends on \(C_{\mathbb{X}}\), including on \(X_1\) itself, making \(\Pr(X_1)\)
non-recoverable in general.
Therefore, \(\Pr(X_1)\) may be recoverable from the m-C-DMG but not from the
associated cm-C-DMG.
\end{example}




In the following, we present recoverability results for both m-C-DMGs and cm-C-DMGs. We denote by \(\mathcal{G}^{\mathbb{C},*m}\) a graph that may be either an m-C-DMG or a cm-C-DMG.
For simplicity, all illustrative examples are presented using cm-C-DMGs. This allows us to emphasize the main ideas while working with the coarser and more compact abstraction.

\section{Recovering joint distribution}
\label{sec:joint_distribution}

Recovering the joint distribution provides the basis for many downstream probabilistic queries, including marginal and conditional distributions. In the presence of missing data, however, the joint distribution is not directly available from the observed data and may fail to be recoverable.

In the following, under the constraints that missingness-indicator vertices have no self-loops and that there are no edges between missingness-indicator vertices, we provide necessary and sufficient conditions for recovering the joint distribution over all clusters.

\begin{theorem}
\label{theorem:recovering_joint_dist}
    Let $\mathcal{G}^{\mathbb{C}, *m}$ be an m-C-DMG or a cm-C-DMG, with no self-loops in any $R$-vertex and no edges between different $R$-vertices. The necessary and sufficient condition for recovering the joint distribution $\Pr(\mathbb{C})$ is the absence of any vertex $C_{\mathbb{X}}\in \mathbb{C}^m$ such that: 
    \begin{itemize}
        \item $C_{\mathbb{X}}$ and its $R$-vertex are neighbors
        \item $C_{\mathbb{X}}$ and its $R$-vertex are connected by a path in which all intermediate vertices are colliders and elements of $\mathbb{C}^m\cup \mathbb{C}^o$.
    \end{itemize}
    When recoverable, $\Pr(\mathbb{c})$ is given by
    \begin{equation*}
        \Pr(\mathbb{c}) = \frac{\Pr(\mathbb{R}=0, \mathbb{c})}{\prod_i \Pr(R_{i}=0\mid MB^o(R_{i}, \mathcal{G}^{\mathbb{C}, *m}), MB^m(R_{i}, \mathcal{G}^{\mathbb{C}, *m}), R_{MB^m(R_{i}, \mathcal{G}^{\mathbb{C}, *m})})},
    \end{equation*}
    where $\mathbb{C}_m$ is the set of clusters in $\mathbb{C}$ containing at least one variable from $\mathbb{V}_m$, $\mathbb{C}_o=\mathbb{C}\setminus \mathbb{C}_m$, 
    $MB^o(R_{i}, \mathcal{G}^{\mathbb{C}, m})$ is the Markov Blanket of $\mathbb{C}_i$ in $\mathbb{C}_o$, $MB^m(R_{i}, \mathcal{G}^{\mathbb{C}, m})$ is the Markov Blanket of $\mathbb{C}_i$ in $\mathbb{C}_m$.
\end{theorem}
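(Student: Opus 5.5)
The plan is to reduce everything to the variable level through the recoverability semantics of Section~\ref{sec:missing_cluster}: $\Pr(\mathbb{C})=\Pr(\mathbb{V})$ is recoverable in $\mathcal{G}^{\mathbb{C},*m}$ iff it is recoverable in every $\mathcal{G}\in\compatible{\mathcal{G}^{\mathbb{C},*m}}$. At the variable level I will use the known characterization of the target law in m-ADMGs where $R$-vertices have no edges among themselves (Mohan et al.; Nabi, Bhattacharya and Shpitser). It says: $\Pr(\mathbb{V})$ is identified iff no $X\in\mathbb{V}_m$ is adjacent to $R_X$, and no $X$ is joined to $R_X$ by a collider path whose intermediate vertices are all substantive. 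When identified, it equals $\Pr(\mathbb{V},\mathbb{R}=0)/\prod_X \Pr(R_X=0\mid \mathrm{MB}(R_X),\,\mathbb{R}_{\mathrm{MB}(R_X)}=0)$. Both directions of the theorem then become statements about how these two structures pass between a compatible m-ADMG and its cluster abstraction. By Proposition~\ref{proposition:compatibility_class} it suffices to argue for a generic abstract graph $\mathcal{G}^{\mathbb{C},*m}$; for a cm-C-DMG the role of $R_X$ is played by $R_{C_{\mathbb{X}}}$.

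\textbf{Sufficiency.} Suppose some compatible $\mathcal{G}$ has a bad structure for a variable $X\in C_{\mathbb{X}}$. Map every vertex of it to its cluster or to its $R$-vertex.
\begin{itemize}
\item An edge $X\to R_X$, $R_X\to X$, or $X\leftrightarrow R_X$ maps, by the definition of edges in the abstract graphs, to an edge of the same type between $C_{\mathbb{X}}$ and its $R$-vertex, so they are neighbors.
\item A collider path $X\,\ast\!\!\to V_1 \leftrightarrow \cdots \leftrightarrow V_k \leftarrow\!\!\ast\, R_X$ maps edge by edge to a walk in which every intermediate cluster still receives two arrowheads, so it remains a collider.
\end{itemize}
If the walk revisits a cluster, I will shortcut it. Any two bidirected-connected occurrences can be joined directly, and the arrowheads at the retained endpoints are preserved, so the shortcut is a collider path with substantive intermediates, as forbidden in the statement. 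The contrapositive is that absence in $\mathcal{G}^{\mathbb{C},*m}$ implies absence in every compatible $\mathcal{G}$, and hence recoverability.

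It remains to justify the cluster-level formula. For each $R_i$, the abstract conditioning set (the clusters in the Markov blanket, together with their indicators) contains the variable-level Markov blanket of the corresponding $R_X$. I must show the extra variables do not change $\Pr(R_X=0\mid\cdot)$. Since $R$-vertices have no children except proxies, $R_X$ is a sink in $\mathcal{G}$ restricted to $\mathbb{V}\cup\mathbb{R}$. So $R_X$ is d-separated from every vertex outside its Markov blanket given the blanket, and this persists when the blanket is enlarged by non-neighbors of the district of $R_X$. I expect that to be a short d-separation check.

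\textbf{Necessity.} If $\mathcal{G}^{\mathbb{C},*m}$ contains such a $C_{\mathbb{X}}$, I will construct one compatible m-ADMG containing the variable-level analog, and then invoke the variable-level non-identifiability, which supplies two laws agreeing on the manifest distribution but differing in $\Pr(\mathbb{V})$. The construction works as follows.
\begin{enumerate}
\item Choose a representative $X_0\in C_{\mathbb{X}}\cap\mathbb{V}_m$, with $R_{X_0}$ realizing the $R$-vertex.
\item Route every abstract edge used by the bad structure through a designated variable in each intermediate cluster, so that arrowheads land on that variable.
\item Realize all remaining abstract edges, including cluster self-loops and directed cycles, by edges between the other variables.
\item Choose a global topological order so that the result is acyclic.
\end{enumerate}

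\textbf{Main obstacle.} I expect the main obstacle to be this step: simultaneously realizing every cluster-level edge (compatibility demands each one be witnessed), keeping the graph acyclic when the abstract graph has cycles, and preserving exactly the chosen collider path. When clusters are singletons, or the cycles force edges onto the designated variable, this may be impossible, so some care or extra assumptions may be needed. My first attempt would be to order variables within each cluster so that designated variables come last, and to realize back-edges from non-designated variables.
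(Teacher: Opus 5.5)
Your plan follows the paper's proof step for step. Sufficiency goes by projecting a variable-level adjacency or all-collider path into the abstract graph; your shortcutting of repeated clusters does the work of the paper's primary path, and that half is fine. Necessity goes by exhibiting one compatible m-ADMG built on representative variables and invoking Mohan--Pearl. The formula is inherited from the variable level.

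\textbf{Gap 1: necessity is never established.} You leave the witness construction open, suspect it may be impossible, and the rule you propose cannot always be followed. Take $C_{\mathbb X}=\{X_0,X_1\}$ with a self-loop, $C_1=\{V\}$, and edges $C_{\mathbb X}\to C_1$, $C_1\to C_{\mathbb X}$, $C_1\leftrightarrow R_{C_{\mathbb X}}$. The back edge must leave the designated (and only) variable $V$, yet the witness $X_0\to V\to X_1$, $X_0\to X_1$, $V\leftrightarrow R_{X_0}$ exists.

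The missing observation is that every interior edge of an all-collider path is bidirected. So is the edge at the $R$-end, since indicators have no substantive children (you use this yourself). Hence at most one directed substantive edge is needed, namely a realisation of $C_{\mathbb X}\to C_1$. The construction is then:
\begin{enumerate}
\item Start from any compatible acyclic m-ADMG and pick an existing edge $X_a\to V_1$ witnessing $C_{\mathbb X}\to C_1$.
\item Relabel the substantive variables inside $C_{\mathbb X}$ so that $X_a$ becomes the missing variable $X_0$ whose indicator ends the path. Keep indicators fixed and re-attach each proxy to its namesake. This changes no cluster-level adjacency and creates no directed cycle, since proxies are sinks.
\item Add $V_1\leftrightarrow V_2\leftrightarrow\cdots\leftrightarrow V_k\leftrightarrow R_{X_0}$. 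Each edge is licensed by an edge of the abstract path, and bidirected edges cannot close a directed cycle.
\end{enumerate}
The adjacency case is analogous: add $X_0\to R_{X_0}$ or $X_0\leftrightarrow R_{X_0}$. Your singleton/cycle worry therefore bites only when the compatibility class is empty. There recoverability holds vacuously and ``necessity'' is false as stated; this non-emptiness assumption is also left implicit in the paper.

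\textbf{Gap 2: the formula step does not hold in the stated generality.} The Mohan--Pearl theorem assumes, besides the absence of $R$--$R$ edges, that no latent variable is a parent of an indicator. The abstract graphs here allow $R\leftrightarrow C$ edges, e.g.\ $R_{C_{\mathbb Y}}\leftrightarrow C_{\mathbb Z}$ in Figure~\ref{fig:m_clusters}. Then the variable-level blanket of $R_X$ can contain another indicator, which the displayed conditioning set omits, so your claim that the abstract conditioning set contains the blanket fails.

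Concretely, take singleton clusters with $Z\to X$, $Z\to Y$, $R_X\leftrightarrow Z\leftrightarrow R_Y$ and $Z$ fully observed. The criterion holds and $\Pr(x,y,z)$ is recoverable. However, the correct denominator is $\Pr(R_X=0,R_Y=0\mid z)$, not $\Pr(R_X=0\mid z)\Pr(R_Y=0\mid z)$, because $Z$ is a collider between the two indicators.

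Under the extra hypothesis your weak-union argument does work. Even then, for a cm-C-DMG you still need the indicators inside one cluster to be conditionally independent given the conditioning set; this holds when indicators have only substantive parents. That is what turns $\prod_{X\in C_i}\Pr(R_X=0\mid\cdot)$ into $\Pr(R_{C_i}=0\mid\cdot)$. Saying that $R_{C_{\mathbb X}}$ ``plays the role of'' $R_X$ skips this step. The paper simply cites Mohan--Pearl at this point, so the same caveat applies to its proof.
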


This result is closely related to the recoverability criterion established by
\cite{Mohan_2014} for variable-level m-ADMGs. It shows that an analogous graphical
condition can be formulated directly at the cluster level. Moreover, because
m-C-DMGs and cm-C-DMGs may contain cycles, the result suggests that the
acyclicity assumption imposed in standard ADMGs is not essential for this type of recoverability result.

In the following, we illustrate the theorem on our running example and show that the joint distribution is recoverable in this case.

\begin{example}
    Consider the cm-C-DMG in Figure~\ref{fig:m_clusters} (b). In this graph, $C_{\mathbb X}$ is not adjacent to $R_{C_{\mathbb X}}$, and
there is no path from $C_{\mathbb X}$ to $R_{C_{\mathbb X}}$ whose
intermediate vertices are all colliders belonging to
$\mathbb{C}^m\cup\mathbb{C}^o$. Similarly, $C_{\mathbb Y}$ is not adjacent to
$R_{C_{\mathbb Y}}$, and no such collider path connects them. Hence no
partially observed cluster violates the condition of the theorem.
Therefore, the joint distribution $\Pr(c_{\mathbb{X}}, c_{\mathbb{Y}}, c_{\mathbb{Z}})$ is recoverable.
\end{example}

\section{Recovering macro causal effects}
\label{sec:causal_effect}

So far, we have focused on the recoverability of probabilistic, non-causal queries, namely, joint distributions. We now turn to the recovery of macro causal effects, which is a central objective in many applications. For instance, in public health, it is often of interest to assess the effect of sexual behavior on the risk of acquiring HIV, even when data is partially observed. Such questions require going beyond observational distributions and reasoning about interventions at the cluster level.

Before presenting the main results of this section, we introduce several key tools that will be used to derive recoverability conditions for macro causal effects. 

We begin by introducing the notion of blocked paths. Let
$\mathcal{G}^{\mathbb{C},*m}$ denote either an m-C-DMG or a cm-C-DMG. A path
between two vertices is said to be blocked by a set $C_{\mathbb W}$ if at least
one of the following conditions holds: either the path contains a non-collider
that belongs to $C_{\mathbb W}$, or the path contains a collider such that
neither the collider nor any of its descendants belongs to $C_{\mathbb W}$.
Otherwise, the path is said to be active given $C_{\mathbb W}$.

Two sets of vertices $C_{\mathbb X}$ and $C_{\mathbb Y}$ are said to be
$d$-separated by $C_{\mathbb W}$ in $\mathcal{G}^{\mathbb{C},*m}$ if every path
between a vertex in $C_{\mathbb X}$ and a vertex in $C_{\mathbb Y}$ is blocked
by $C_{\mathbb W}$. We write this as
$
\dsepc{\mathbb{X}}{\mathbb{Y}}{\mathbb{W}}{\mathcal{G}^{\mathbb{C},*m}}$.
The notion of $d$-separation was originally introduced for ADMGs~\citep{Pearl_2000}. It was recently
shown to be sound for C-DMGs, which may contain cycles~\citep{Ferreira_2025}. Since m-C-DMGs and
cm-C-DMGs are particular forms of C-DMGs, this soundness result applies directly to both graph classes considered here.

To reason about interventions, we also use mutilated graphs. Given a graph
$\mathcal{G}^{\mathbb{C},*m}$ and a set of vertices $C_{\mathbb X}$, we denote by
$\mathcal{G}^{\mathbb{C},*m}_{\overline{C_{\mathbb X}}}$ the graph obtained by
removing all incoming directed edges into vertices in $C_{\mathbb X}$. Similarly,
$\mathcal{G}^{\mathbb{C},*m}_{\underline{C_{\mathbb X}}}$ denotes the graph
obtained by removing all outgoing directed edges from vertices in
$C_{\mathbb X}$. More generally,
$\mathcal{G}^{\mathbb{C},*m}_{\overline{C_{\mathbb X}}\underline{C_{\mathbb Z}}}$
denotes the graph obtained by removing incoming edges into $C_{\mathbb X}$ and
outgoing edges from $C_{\mathbb Z}$.

Using d-separation and the notion of mutilated graphs, we can now present the do-calculus, a set of rules that can be used to identify a causal effect. The do-calculus was originally introduced for ADMGs and m-ADMGs and was later
shown to be sound for C-DMGs~\citep{Ferreira_2025}. In the following, we present the corresponding rules for m-C-DMGs and cm-C-DMGs.

\begin{figure}
\centering
\begin{tikzpicture}[
    every node/.style={draw, circle, align=center, minimum width=0.2cm, minimum height=0.2cm},
    font=\footnotesize, scale=0.8
]

\node (Z) at (-2,0) {$C_{\mathbb Z}$};
\node (X) at (0, 0) {$C_{\mathbb X}$};
\node (Y) at (2, 0) {$C_{\mathbb Y}$};
\node (Y*) at (-1, -1) {$C_{\mathbb Y}^*$};

\node (ry) at (-4,0) {$R_{C_{\mathbb{Y}}}$};

\draw[->,>=latex] (X) -- (Y);

     \draw[->,>=latex] (Z) -- (X);

\draw[->,>=latex] (Y) -- (Y*);
\draw[->,>=latex] (ry) -- (Y*);

\draw[<->,>=latex, dashed] (ry) to [out=80,in=100, looseness=1] (Z);
\draw[<->,>=latex, dashed] (Z) to [out=80,in=100, looseness=0.7] (Y);

\draw[->,>=latex] (X) to[out=70,in=20,looseness=1.8] (X);
\draw[->,>=latex] (Y) to[out=70,in=20,looseness=1.8] (Y);
\draw[->,>=latex] (Z) to[out=70,in=20,looseness=1.8] (Z);

\end{tikzpicture}
    \caption{cm-DMG where the macro causal effect of $C_{\mathbb{X}}$ on $C_{\mathbb{Y}}$ is recoverable but the joint distribution is not.}
\label{fig:macro_causal_effect_recoverable}
\end{figure}



\begin{definition}[Rules of the do-calculus in m-C-DMGs and cm-C-DMGs]
    The three following rules of the do-calculus are the following:
	\begin{equation*}
		\begin{aligned}
			\textbf{Rule 1:}& \probac{\mathbb{c}_{\mathbb{y}}}{\interv{\mathbb{c}_{\mathbb{z}}},\mathbb{c}_{\mathbb{x}},\mathbb{c}_{\mathbb{w}}} = \probac{\mathbb{c}_{\mathbb{y}}}{\interv{\mathbb{c}_{\mathbb{z}}},\mathbb{c}_{\mathbb{w}}}
			&\text{if } \dsepc{\mathbb{C}_{\mathbb{Y}}}{\mathbb{C}_{\mathbb{X}}}{\mathbb{C}_{\mathbb{Z}}, \mathbb{C}_{\mathbb{W}}}{\mathcal{G}^{\mathbb{C},*m}_{\overline{\mathbb{C}_{\mathbb{Z}}}}}\\
			\textbf{Rule 2:}& \probac{\mathbb{c}_{\mathbb{y}}}{\interv{\mathbb{c}_{\mathbb{z}}},\interv{\mathbb{c}_{\mathbb{x}}},\mathbb{c}_{\mathbb{w}}} = \probac{\mathbb{c}_{\mathbb{y}}}{\interv{\mathbb{c}_{\mathbb{z}}},\mathbb{c}_{\mathbb{x}},\mathbb{c}_{\mathbb{w}}}
			&\text{if } \dsepc{\mathbb{C}_{\mathbb{Y}}}{\mathbb{C}_{\mathbb{X}}}{\mathbb{C}_{\mathbb{Z}}, \mathbb{C}_{\mathbb{W}}}{\mathcal{G}^{\mathbb{C},*m}_{\overline{\mathbb{C}_{\mathbb{Z}}}\underline{\mathbb{C}_{\mathbb{X}}}}}\\
			\textbf{Rule 3:}& \probac{\mathbb{c}_{\mathbb{y}}}{\interv{\mathbb{c}_{\mathbb{z}}},\interv{\mathbb{c}_{\mathbb{x}}},\mathbb{c}_{\mathbb{w}}} = \probac{\mathbb{c}_{\mathbb{y}}}{\interv{\mathbb{c}_{\mathbb{z}}},\mathbb{c}_{\mathbb{w}}}
			&\text{if } \dsepc{\mathbb{C}_{\mathbb{Y}}}{\mathbb{C}_{\mathbb{X}}}{\mathbb{C}_{\mathbb{Z}}, \mathbb{C}_{\mathbb{W}}}{\mathcal{G}^{\mathbb{C},*m}_{\overline{\mathbb{C}_{\mathbb{Z}}}\overline{\mathbb{C}_{\mathbb{X}}(\mathbb{C}_{\mathbb{W}})}}}\\
		\end{aligned}
	\end{equation*}
	where $\mathbb{C}_{\mathbb{X}}(\mathbb{C}_{\mathbb{W}})$ is the set of vertices in $\mathbb{C}_{\mathbb{X}}$ that are non-ancestors of any vertex in $\mathbb{C}_{\mathbb{W}}$ in the mutilated graph $\mathcal{G}^{\mathbb{C},*m}_{\overline{{\mathbb{C}_{\mathbb{Z}}}}}$.
\end{definition}

Having introduced all the necessary tools, we are now in a position to show how to recover macro causal effects from m-C-DMGs and cm-C-DMGs. The results presented in this section provide conditions under which such effects can be identified from the observed data using the graphical structure. This constitutes the main result of this section.

\begin{theorem}[Recoverability of macro causal effects in m-C-DMGs and cm-C-DMGs]
	\label{theorem:soundness_do_calculus_missingness}
A causal query \(Q\) is recoverable from an m-C-DMG or cm-C-DMG if there exists
a sequence of applications of do-calculus, probability manipulations, and Equation~\ref{eq:from_missing_to_not_missing}
that transforms \(Q\) into an expression involving only observable quantities (namely proxy variables,
missingness indicators, and fully observed variables) and containing no do-operators.
\end{theorem}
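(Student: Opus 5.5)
The theorem is a soundness statement, so the plan is to show that each admissible transformation step preserves the value of the query. The argument has to hold uniformly over every variable-level m-ADMG $\mathcal{G}\in\compatible{\mathcal{G}^{\mathbb{C},*m}}$ and every manifest distribution compatible with such a $\mathcal{G}$. By the recoverability semantics of Section~\ref{sec:missing_cluster}, this is exactly what is required: recoverability in the abstract graph means recoverability in every compatible m-ADMG.

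Fix an arbitrary compatible $\mathcal{G}$ and an underlying SCM inducing it, with a strictly positive manifest distribution. Consider the given chain of expressions $Q=E_0, E_1,\dots,E_k$, where $E_k$ contains no do-operator and mentions only proxies, missingness indicators and fully observed variables. I will show by induction on $j$ that $E_j=E_{j+1}$ holds as an identity of (interventional) distributions in this SCM. There are three kinds of step.

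\textbf{Probability manipulations.} Marginalization, chain rule and Bayes rule are valid in any distribution, including interventional ones. Positivity over complete cases guarantees that the conditionals involved are well defined.

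\textbf{Equation~\ref{eq:from_missing_to_not_missing}.} Conditioning on $R_{V_i}=0$ (or on $R_{C}=0$ for the whole cluster in a cm-C-DMG) makes $V_i^\ast=V_i$ deterministically. Replacing $V_i$ by $V_i^\ast$ under such conditioning is therefore an identity. It also holds under interventions, since the proxy mechanism is never intervened on: proxies have no children, and macro interventions target only cluster vertices.

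\textbf{Do-calculus rules.} This is the main step. I will invoke the soundness of the do-calculus for C-DMGs from \citep{Ferreira_2025}, noting that m-C-DMGs and cm-C-DMGs are particular C-DMGs whose vertex set also contains $R$-vertices and proxies. The key lemma is that a $d$-separation statement in a mutilated abstract graph $\mathcal{G}^{\mathbb{C},*m}_{\overline{\cdot}\underline{\cdot}}$ between sets of cluster and $R$-vertices implies the corresponding $d$-separation between the underlying variable sets in the correspondingly mutilated $\mathcal{G}$. I would prove the contrapositive by projection:
\begin{itemize}
\item Take an active path in the mutilated $\mathcal{G}$ and map each variable to its cluster, each $R_X$ to its $R$-vertex, and each $X^\ast$ to itself.
\item Every edge maps to an edge of the same type, or is absorbed by a self-loop within a cluster. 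Mutilation commutes with this map, because removing incoming edges into all variables of $C_{\mathbb{X}}$ at the variable level corresponds to removing incoming edges into $C_{\mathbb{X}}$.
\item The resulting walk must then be converted into an active path in the abstract graph. Consecutive variables in the same cluster collapse. The collider or non-collider status at cluster vertices needs care, since a non-collider inside a conditioned cluster may project onto a walk that passes through that cluster.
\end{itemize}
I would handle this exactly as in \citep{Ferreira_2025}, treating the cluster self-loops as making every conditioned cluster vertex capable of acting as a collider. The extra $R$- and proxy vertices are individual or whole-cluster vertices with no internal structure, so they add no new cases.

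Given the lemma, each rule application in the abstract graph is licensed by the corresponding standard do-calculus rule in the acyclic $\mathcal{G}$. That rule is sound by \citep{Pearl_2000}, applied to the m-ADMG viewed as an ADMG over $\mathbb{V}\cup\mathbb{R}\cup\mathbb{V}^\ast$, with macro interventions read as joint interventions on all variables of a cluster. Rule 3 needs one additional check: the set $\mathbb{C}_{\mathbb{X}}(\mathbb{C}_{\mathbb{W}})$ of non-ancestors at the cluster level must map into non-ancestors at the variable level. This holds because ancestry in $\mathcal{G}$ projects to ancestry in the abstract graph.

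Chaining the equalities gives $Q=E_k$ in every compatible model. Since $E_k$ is a functional of the manifest distribution alone, plugging in consistent estimates of its observable terms yields a consistent estimator of $Q$. The main obstacle I expect is the projection lemma for $d$-separation with conditioning inside clusters, and I would rely on \citep{Ferreira_2025} to resolve it.
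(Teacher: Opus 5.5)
Your proposal is correct and takes essentially the same route as the paper. The paper also argues that probability manipulations and Equation~\ref{eq:from_missing_to_not_missing} hold in every compatible model, and it transfers each do-calculus step to every compatible m-ADMG through a lemma that combines the C-DMG soundness result of \citep{Ferreira_2025} with variable-level do-calculus soundness; your projection sketch and Rule~3 non-ancestor check spell out details that the paper takes directly from \citep{Ferreira_2025}.
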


In the following, we illutrate the theorem with two examples.

\begin{example}
    Consider the cm-C-DMG in Figure~\ref{fig:m_clusters} (b). Suppose we are interested in the
macro causal effect
$\probac{c_{\mathbb Y}}{\interv{c_{\mathbb X}}}$.
The first step is to transform the partially observed outcome into its observed
proxy. Since $R_{C_{\mathbb Y}}$ is d-separated from $C_{\mathbb Y}$ in the
mutilated graph where $C_{\mathbb X}$ is intervened on, Rule 1 of the
do-calculus gives
$$
\probac{c_{\mathbb Y}}{\interv{c_{\mathbb X}}}
=
\probac{c_{\mathbb Y}}{\interv{c_{\mathbb X}},R_{C_{\mathbb Y}}=0}.
$$
Then, by Equation~\ref{eq:from_missing_to_not_missing},
$$
\probac{c_{\mathbb Y}}{\interv{c_{\mathbb X}},R_{C_{\mathbb Y}}=0}
=
\probac{c_{\mathbb Y}^\ast}{\interv{c_{\mathbb X}},R_{C_{\mathbb Y}}=0}.
$$

The next step is to remove the do-operator. In this graph, there is no open
non-causal path from $C_{\mathbb X}$ to $C_{\mathbb Y}$ after conditioning
on the observed outcome missingness status. Hence, by Rule 2 of the do-calculus,
$$
\probac{c_{\mathbb Y}^\ast}{\interv{c_{\mathbb X}},R_{C_{\mathbb Y}}=0}
=
\probac{c_{\mathbb Y}^\ast}{c_{\mathbb X},R_{C_{\mathbb Y}}=0}.
$$

Finally, since $C_{\mathbb X}$ is partially observed, we need to replace it by its
proxy. Notice that in our graph we have $C_{\mathbb Y}^\ast$ d-separated from $R_{C_{\mathbb X}}$
given $C_{\mathbb X}$ and $R_{C_{\mathbb Y}}$. So using Rule 1 of the do-calculus we can add  $R_{C_{\mathbb X}}=0$ to the expression. Then
Using again
Equation~\ref{eq:from_missing_to_not_missing}, we obtain
$$
\probac{c_{\mathbb Y}^\ast}{c_{\mathbb X},R_{C_{\mathbb Y}}=0}
=
\probac{c_{\mathbb Y}^\ast}
{c_{\mathbb X}^\ast,R_{C_{\mathbb X}}=0,R_{C_{\mathbb Y}}=0}.
$$
\end{example}

In the previous example, the missingness indicator of $C_{\mathbb{Y}}$ can be $d$-separated from $C_{\mathbb{Y}}$. This might suggest that such a separation is always required for recoverability.

\begin{example}
Consider the cm-C-DMG in Figure~\ref{fig:macro_causal_effect_recoverable}.
Suppose that we are interested in the macro causal effect
$\probac{C_{\mathbb{Y}}}{\interv{C_{\mathbb{X}}}}$.
The first step is to replace each partially observed variable that is not
intervened on by its observed proxy. By Rule 1 of the do-calculus,
\begin{align*}
\probac{c_{\mathbb{Y}}}{\interv{c_{\mathbb{X}}}}
&=
\probac{c_{\mathbb{Y}}}{\interv{c_{\mathbb{X}}},R_{C_{\mathbb{Y}}}=0}.
\end{align*}
Then, by Equation~\ref{eq:from_missing_to_not_missing},
\begin{align*}
\probac{c_{\mathbb{Y}}}{\interv{c_{\mathbb{X}}},R_{C_{\mathbb{Y}}}=0}
&=
\probac{c_{\mathbb{Y}}^\ast}{\interv{c_{\mathbb{X}}},R_{C_{\mathbb{Y}}}=0}.
\end{align*}

The next step is to remove the do-operator. We first use the law of total
probability to introduce $C_{\mathbb{Z}}$:
\begin{align*}
\probac{c_{\mathbb{Y}}^\ast}{\interv{c_{\mathbb{X}}},R_{C_{\mathbb{Y}}}=0}
&=
\sum_{c_{\mathbb{Z}}}
\probac{c_{\mathbb{Y}}^\ast}
{\interv{c_{\mathbb{X}}},R_{C_{\mathbb{Y}}}=0,c_{\mathbb{Z}}}
\probac{c_{\mathbb{Z}}}
{\interv{c_{\mathbb{X}}},R_{C_{\mathbb{Y}}}=0}.
\end{align*}
By Rule 2 of the do-calculus,
\begin{align*}
\probac{c_{\mathbb{Y}}^\ast}{\interv{c_{\mathbb{X}}},R_{C_{\mathbb{Y}}}=0}
&=
\sum_{c_{\mathbb{Z}}}
\probac{c_{\mathbb{Y}}^\ast}
{c_{\mathbb{X}},R_{C_{\mathbb{Y}}}=0,c_{\mathbb{Z}}}
\probac{c_{\mathbb{Z}}}
{\interv{c_{\mathbb{X}}},R_{C_{\mathbb{Y}}}=0}.
\end{align*}
Finally, by Rule 3,
\begin{align*}
\probac{c_{\mathbb{Y}}^\ast}{\interv{c_{\mathbb{X}}},R_{C_{\mathbb{Y}}}=0}
&=
\sum_{c_{\mathbb{Z}}}
\probac{c_{\mathbb{Y}}^\ast}
{c_{\mathbb{X}},R_{C_{\mathbb{Y}}}=0,c_{\mathbb{Z}}}
\probac{c_{\mathbb{Z}}}
{R_{C_{\mathbb{Y}}}=0}.
\end{align*}
\end{example}

Interestingly, the previous example shows that a macro causal effect can be recoverable even when the joint distribution is not. This highlights that recoverability of causal effects does not necessarily require recoverability of the full joint distribution.


\section{Discussion}
\label{sec:discussion}

In this paper, we studied recoverability under abstract missingness representations. We introduced two cluster-based missingness graphs: the m-C-DMG, which preserves variable-specific missingness indicators, and the cm-C-DMG, which further aggregates missingness indicators associated with variables in the same cluster. We formalized compatibility between these abstract graphs and variable-level missingness models, and showed how recoverability conclusions can be transferred across abstraction levels. In particular, recoverability from a cm-C-DMG implies recoverability from the corresponding m-C-DMG, while the converse does not hold in general. Moreover, we gave sufficient and necessary conditions for recovering joint distributions and sufficient conditions for recovering macro causal effects.
Studying missingness through causal abstractions has several advantages. First, it provides a principled way to analyze settings where only coarse structural knowledge is available. Second, it clarifies how much information is lost when missingness mechanisms are aggregated, and when this loss matters for a specific query.

This work has several limitations. Because m-C-DMGs and cm-C-DMGs are abstractions, they may be conservative: a query that is not recoverable from the abstract graph may still be recoverable in some compatible variable-level graph. Moreover, cm-C-DMGs can hide distinctions between missingness indicators that are essential for recovery. Their usefulness therefore depends on whether the chosen clustering preserves the missingness information relevant to the target query. Furthermore, we considered only a special case of cm-C-DMGs, where missingness indicators are clustered only when their associated variables belong to the same substantive cluster. In real applications, if experts cannot specify an m-C-DMG, the partitioning they can provide may not satisfy this simplifying requirement.

Several questions remain open. An important direction is to determine whether our recoverability conditions for macro causal effects are complete. Another direction is to develop algorithms for recoverability directly on these abstract graphs. For non-causal queries, we focused mainly on recovering joint distributions; it would therefore be interesting to further investigate conditional distributions. Finally, future work should extend the framework to the recovery of micro causal effects under causal abstractions, building on recent work on identifiability from abstract causal graphs~\citep{Assaad_2024}.

\section*{Societal impacts }
This paper presents work whose goal is to advance the field
of Machine Learning and Epidemiology, and the limits of what can be inferred from data. There are many potential societal consequences of our work, none which we feel must be specifically highlighted here.

\section*{Acknowledgement}
This work was supported by the CIPHOD project (ANR-23-CPJ1-0212-01) as well as by the PP-IMPACT project.

\bibliographystyle{plainnat}
\bibliography{references}

@book{Pearl_2000,
author = {Pearl, Judea},
title = {Causality: Models, Reasoning and Inference},
year = {2009},
isbn = {052189560X},
publisher = {Cambridge University Press},
address = {USA},
edition = {2nd}
}

@inproceedings{Mohan_2013,
 author = {Mohan, Karthika and Pearl, Judea and Tian, Jin},
 booktitle = {Advances in Neural Information Processing Systems},
 editor = {C.J. Burges and L. Bottou and M. Welling and Z. Ghahramani and K. Weinberger},
 pages = {},
 publisher = {Curran Associates, Inc.},
 title = {Graphical Models for Inference with Missing Data},
 volume = {26},
 year = {2013}
}

@article{Mohan_2021,
author = {Karthika Mohan and Judea Pearl},
title = {Graphical Models for Processing Missing Data},
journal = {Journal of the American Statistical Association},
volume = {116},
number = {534},
pages = {1023--1037},
year = {2021},
publisher = {Taylor \& Francis},
doi = {10.1080/01621459.2021.1874961}
}

@inproceedings{Mohan_2014,
 author = {Mohan, Karthika and Pearl, Judea},
 booktitle = {Advances in Neural Information Processing Systems},
 editor = {Z. Ghahramani and M. Welling and C. Cortes and N. Lawrence and K. Weinberger},
 pages = {},
 publisher = {Curran Associates, Inc.},
 title = {Graphical Models for Recovering Probabilistic and Causal Queries from Missing Data},
 volume = {27},
 year = {2014}
}

@article{Rubin_1976,
    author = {Rubin, Donald B.},
    title = {Inference and missing data},
    journal = {Biometrika},
    volume = {63},
    number = {3},
    pages = {581-592},
    year = {1976},
    month = {12},
    issn = {0006-3444},
    doi = {10.1093/biomet/63.3.581},
}

@article{Ferreira_2025, 
title={Identifying Macro Conditional Independencies and Macro Total Effects in Summary Causal Graphs with Latent Confounding}, volume={39}, 
DOI={10.1609/aaai.v39i25.34882}, number={25}, journal={Proceedings of the AAAI Conference on Artificial Intelligence}, author={Ferreira, Simon and Assaad, Charles K.}, year={2025}, month={Apr.}, pages={26787-26795} 
}

@inproceedings{Anand_2023,
author = {Anand, Tara V. and Ribeiro, Adele H. and Tian, Jin and Bareinboim, Elias},
title = {Causal effect identification in cluster DAGs},
year = {2023},
isbn = {978-1-57735-880-0},
publisher = {AAAI Press},
doi = {10.1609/aaai.v37i10.26435},
booktitle = {Proceedings of the Thirty-Seventh AAAI Conference on Artificial Intelligence and Thirty-Fifth Conference on Innovative Applications of Artificial Intelligence and Thirteenth Symposium on Educational Advances in Artificial Intelligence},
articleno = {1366},
numpages = {8},
series = {AAAI'23/IAAI'23/EAAI'23}
}

@InProceedings{Assaad_2024,
  title = 	 {Identifiability of total effects from abstractions of time series causal graphs},
  author =       {Assaad, Charles K. and Devijver, Emilie and Gaussier, Eric and Goessler, Gregor and Meynaoui, Anouar},
  booktitle = 	 {Proceedings of the Fortieth Conference on Uncertainty in Artificial Intelligence},
  pages = 	 {173--185},
  year = 	 {2024},
  editor = 	 {Kiyavash, Negar and Mooij, Joris M.},
  volume = 	 {244},
  series = 	 {Proceedings of Machine Learning Research},
  month = 	 {15--19 Jul},
  publisher =    {PMLR},
}

@InProceedings{Assaad_2023,
  title = 	 {Root Cause Identification for Collective Anomalies in Time Series given an Acyclic Summary Causal Graph with Loops},
  author =       {Assaad, Charles K. and Ez-Zejjari, Imad and Zan, Lei},
  booktitle = 	 {Proceedings of The 26th International Conference on Artificial Intelligence and Statistics},
  pages = 	 {8395--8404},
  year = 	 {2023},
  editor = 	 {Ruiz, Francisco and Dy, Jennifer and van de Meent, Jan-Willem},
  volume = 	 {206},
  series = 	 {Proceedings of Machine Learning Research},
  month = 	 {25--27 Apr},
  publisher =    {PMLR}
}

@article{Anand_2025,
  title={Leveraging Cluster Causal Diagrams for Determining Causal Effects in Medicine},
  author={Anand, Tara V. and Hripcsak, George},
  journal={AMIA Annual Symposium Proceedings},
  year={2025},
  note={PMID: 40417561}
}

@inproceedings{Peters_2013,
 author = {Peters, Jonas and Janzing, Dominik and Sch\"{o}lkopf, Bernhard},
 booktitle = {Advances in Neural Information Processing Systems},
 editor = {C.J. Burges and L. Bottou and M. Welling and Z. Ghahramani and K. Weinberger},
 pages = {},
 publisher = {Curran Associates, Inc.},
 title = {Causal Inference on Time Series using Restricted Structural Equation Models},
 volume = {26},
 year = {2013}
}

@article{Inoue_2022,
  title={Causal Effect of Chronic Pain on Mortality Through Opioid Prescriptions: Application of the Front-Door Formula},
  author={Inoue, Keisuke and Ritz, Beate and Arah, Onyebuchi A.},
  journal={Epidemiology},
  year={2022},
  volume={33},
  number={4},
  pages={544--552},
  doi={10.1097/EDE.0000000000001477},
}

@incollection{Wothke_2000,
  author    = {Wothke, Werner},
  title     = {Longitudinal and Multigroup Modeling with Missing Data},
  booktitle = {Modeling Longitudinal and Multilevel Data: Practical Issues, Applied Approaches, and Specific Examples},
  year      = {2000},
  publisher = {Psychology Press},
  doi       = {10.4324/9781410601940}
}


\newpage

\appendix

\section{Technical appendices and supplementary material}
\subsection{Proof of Proposition~\ref{proposition:compatibility_class}}
\begin{proof}
Let \(\mathcal{G}^m \in \compatible{\mathcal{G}^{\mathbb{C}, m}}\). We show that
\(\mathcal{G} \in \compatible{\mathcal{G}^{\mathbb{C}, cm}}\), where
\(\mathcal{G}^{\mathbb{C}, cm}=\pi(\mathcal{G}^{\mathbb{C}, m})\).

By definition, \(\pi\) is obtained by merging, for each cluster \(C_{\mathbb{X}}\in\mathbb{C}\), all variable-level missingness indicators
\[
\{R_X : X\in C_{\mathbb{X}}\}
\]
into a single cluster-level missingness indicator \(R_{C_{\mathbb{X}}}\). All cluster-level adjacencies and orientations induced by the substantive variables are preserved, while distinctions between missingness indicators inside the same cluster are forgotten.

Since \(\mathcal{G}^m\) is compatible with \(\mathcal{G}^{\mathbb{C}, m}\), every variable-level edge in \(\mathcal{G}^m\) induces an edge allowed by \(\mathcal{G}^{\mathbb{C}, m}\). After applying \(\pi\), any edge involving a variable-level missingness indicator \(R_X\), with \(X\in C_{\mathbb{X}}\), is mapped to the corresponding edge involving \(R_{C_{\mathbb{X}}}\). Thus, every edge of \(\mathcal{G}^m\) also induces an edge allowed by \(\mathcal{G}^{\mathbb{C}, cm}\). Therefore,
\[
\mathcal{G}^m \in \compatible{\mathcal{G}^{\mathbb{C}, cm}}.
\]
As this holds for every \(\mathcal{G} \in \compatible{\mathcal{G}^m}\), we obtain
\[
\compatible{\mathcal{G}^{\mathbb{C}, m}}
\subseteq
\compatible{\mathcal{G}^{\mathbb{C}, cm}}.
\]

It remains to show that the inclusion is strict in general. Consider a cluster
\(C_{\mathbb{X}}=\{X_1,X_2\}\). In an m-C-DMG, the missingness mechanisms of \(X_1\) and \(X_2\) are represented by two distinct vertices, \(R_{X_1}\) and \(R_{X_2}\). Therefore, the graph can distinguish, for instance, a variable-level m-ADMG in which some variable \(L\) points to \(R_{X_1}\) but not to \(R_{X_2}\).

After applying \(\pi\), both \(R_{X_1}\) and \(R_{X_2}\) are merged into a single vertex \(R_{C_{\mathbb{X}}}\). The cm-C-DMG only records that \(L\) may affect the missingness mechanism of the cluster \(C_{\mathbb{X}}\), without specifying whether this effect is on \(R_{X_1}\), on \(R_{X_2}\), or on both. Hence a variable-level m-ADMG in which \(L\) points to \(R_{X_2}\) but not to \(R_{X_1}\) is compatible with the cm-C-DMG, but not with the original m-C-DMG if the latter allowed only the edge \(L \to R_{X_1}\).

Thus, there exist variable-level m-ADMGs compatible with
\(\mathcal{G}^{cm}\) that are not compatible with \(\mathcal{G}^m\). Therefore, the inclusion is strict in general:
\[
\compatible{\mathcal{G}^{\mathbb{C}, m}}
\subsetneq
\compatible{\mathcal{G}^{\mathbb{C}, cm}}.
\]

\end{proof}

\subsection{Proof of Proposition~\ref{proposition:compatibility_recover}}

\begin{proof}
By Proposition~\ref{proposition:compatibility_class},
$\compatible{\mathcal{G}^{m}_{\mathbb{C}}}
\subseteq
\compatible{\mathcal{G}^{\mathbb{C}, cm}}$.
Assume that \(Q\) is recoverable in
\(\mathcal{G}^{\mathbb{C}, cm}\). By definition, this means that \(Q\) is
uniquely determined from the observed distribution for all variable-level
m-ADMGs in
\(\compatible{\mathcal{G}^{\mathbb{C}, cm}}\). Since
\(\compatible{\mathcal{G}^{m}_{\mathbb{C}}}\) is a subset of this compatibility
class, the same uniqueness holds for all variable-level m-ADMGs compatible with
\(\mathcal{G}^{m}_{\mathbb{C}}\). Therefore, \(Q\) is recoverable in
\(\mathcal{G}^{m}_{\mathbb{C}}\).

We now show that the converse does not hold in general. Let
\(C=\{A,B\}\) be a cluster and consider the target query
\[
Q=P(A).
\]
Suppose that, in the m-C-DMG, the cluster \(C\) points to the missingness
indicator \(R_B\), but not to \(R_A\):
\[
C \to R_B,
\qquad
C \not\to R_A.
\]
Thus, the m-C-DMG distinguishes the missingness mechanism of \(A\) from that of
\(B\). In particular, it records that the missingness of \(A\) is not affected
by variables in \(C\). Under this abstraction, \(P(A)\) is recoverable from the
observed distribution.

In the associated cm-C-DMG, however, the indicators \(R_A\) and \(R_B\) are
merged into a single vertex \(R_C\). The edge \(C\to R_B\) is therefore abstracted
as
\[
C \to R_C.
\]
This coarser graph no longer specifies whether the missingness mechanism
affected by \(C\) is that of \(A\), that of \(B\), or both. Hence it is
compatible with variable-level m-ADMGs in which the missingness of \(A\) depends
on \(A\) itself. In such cases, \(P(A)\) is generally not recoverable because of
self-censoring.

Therefore, \(P(A)\) can be recoverable in the finer m-C-DMG but not in the
associated cm-C-DMG. Hence the converse implication fails in general.
\end{proof}

\subsection{Proof of Theorem~\ref{theorem:recovering_joint_dist}}

\begin{definition}[Primary Path~\citep{Ferreira_2025}]
    \label{def:primary_path}
    Let $\mathcal{G}^*=(\mathbb{V}^*,\mathbb{E}^*)$ be a graph and $\tilde{\pi}=\langle V^*_1,\cdots,V^*_n \rangle$ a walk. The primary path of $\tilde{\pi}$ is noted $\tilde{\pi}_p$ and is defined iteratively as $U^*_1 = V^*_1$, $\forall 1\leq k, U^*_{k+1} = V^*_{max\{i\mid V^*_i = U^*_k\}+1}$ until $U^*_{k+1}=V^*_n$ with $\langle U^*_k \rightarrow U^*_{k+1}\rangle \subseteq \tilde{\pi}_p$ (resp. $\leftarrow$, $\longdashleftrightarrow$) if $\langle V^*_{max\{i\mid V^*_i = U^*_k\}} \rightarrow V^*_{max\{i\mid V^*_i = U^*_k\}+1}\rangle \subseteq \tilde{\pi}$ (resp. $\leftarrow$, $\longdashleftrightarrow$).
\end{definition}

\begin{proof} 
(1) \textit{$C_{\mathbb{X}}$ and $R_{C_{\mathbb{X}}}$ are neighbors.}

(1a) By definition of a cm-C-DMG, if in $\mathcal{G}^{\mathbb{C}, cm}$,
$C_{\mathbb{X}}$ and $R_{C_{\mathbb{X}}}$ are not neighbors, then there exists
no compatible graph $\mathcal{G}^m$ in which a variable $X \in C_{\mathbb{X}}$
is adjacent to its missingness indicator $R_X \in R_{C_{\mathbb{X}}}$. This
remains true regardless of the addition of more clusters. Thus, non-adjacency
at the cluster level implies non-adjacency in every compatible variable-level
graph.

(1b) Conversely, if $C_{\mathbb{X}}$ is adjacent to $R_{C_{\mathbb{X}}}$ in
$\mathcal{G}^{\mathbb{C}, cm}$, then by compatibility there exists at least one
graph $\mathcal{G}^{m}$ in which a variable $X \in C_{\mathbb{X}}$ is adjacent
to $R_X \in R_{C_{\mathbb{X}}}$. By the recoverability criterion of
\cite{Mohan_2021}, the joint distribution is not recoverable in this graph.
Hence, recoverability cannot be guaranteed from the cluster graph.

(2) \textit{$C_{\mathbb{X}}$ and $R_{C_{\mathbb{X}}}$ are connected by a path in which all intermediate vertices are colliders and elements of $\mathbb{C}^m \cup \mathbb{C}^o$.}

(2a) Suppose that in $\mathcal{G}^{\mathbb{C}, cm}$ there does not exist such a path between $C_{\mathbb{X}}$ and $R_{C_{\mathbb{X}}}$. Assume by contradiction that there exists a compatible m-ADMG $\mathcal{G}^m$ containing a path between some $X \in C_{\mathbb{X}}$ and $R_X \in R_{C_{\mathbb{X}}}$ whose intermediate  vertices are all colliders and substantive variables. Projecting this path onto the cluster graph yields a walk between $C_{\mathbb{X}}$ and $R_{C_{\mathbb{X}}}$. Taking its primary path preserves the collider structure, and therefore yields a path in $\mathcal{G}^{\mathbb{C}, cm}$ whose intermediate  vertices are all colliders and belong to $\mathbb{C}^m \cup \mathbb{C}^o$, which
contradicts the assumption. Hence, the absence of such a path in
$\mathcal{G}^{\mathbb{C}, cm}$ implies its absence in every compatible variable-level graph.

(2b) Conversely, if there exists such a path in
$\mathcal{G}^{\mathbb{C}, cm}$, then one can construct a compatible graph $\mathcal{G}^m$ by selecting a representative variable in each cluster and introducing edges that preserve the collider structure along the path. This yields a path between some $X \in C_{\mathbb{X}}$ and $R_X \in R_{C_{\mathbb{X}}}$ whose intermediate vertices are all colliders. By \citet{Mohan_2021}, the joint distribution is not recoverable in this graph.

Thus, if in $\mathcal{G}^{\mathbb{C}, cm}$, $C_{\mathbb{X}}$ and
$R_{C_{\mathbb{X}}}$ are not neighbors and there is no path between them with
only collider intermediate vertices in $\mathbb{C}^m \cup \mathbb{C}^o$, then by
(1a) and (2a) the joint distribution $P(\mathbb{c})$ is recoverable. If, on the
other hand, $C_{\mathbb{X}}$ and $R_{C_{\mathbb{X}}}$ are neighbors or such a
path exists, then by (1b) and (2b) there exists a compatible variable-level
graph in which the joint distribution is not recoverable.

When recoverable, by \citet{Mohan_2021} we have:
\begin{equation*}
\Pr(\mathbb{c}) =
\frac{
\Pr(\mathbb{R}_{\mathbb{C}}=0, \mathbb{c})
}{
\prod_i \Pr\!\left(
R_{\mathbb{C}_i}=0
\mid
MB^o(R_{\mathbb{C}_i}, \mathcal{G}^{\mathbb{C}, cm}),
MB^m(R_{\mathbb{C}_i}, \mathcal{G}^{\mathbb{C}, cm}),
R_{MB^m(R_{\mathbb{C}_i}, \mathcal{G}^{\mathbb{C}, cm})}
\right)
},
\end{equation*}
where $\mathbb{C}_m$ is the set of clusters in $\mathbb{C}$ containing at least
one variable from $\mathbb{V}_m$, $\mathbb{C}_o=\mathbb{C}\setminus
\mathbb{C}_m$, and $MB^o(\cdot)$ and $MB^m(\cdot)$ denote the Markov blankets
restricted to $\mathbb{C}_o$ and $\mathbb{C}_m$, respectively.

\textit{Extension to m-C-DMGs.} The proof for m-C-DMGs follows the same logic as the proof for cm-C-DMGs. 
\end{proof}

\subsection{Proof of Theorem~\ref{theorem:soundness_do_calculus_missingness}}

\begin{lemma}[Soundness of do-calculus in cluster missingness graphs]
\label{lemma:do-calculus_missingness}
Let \(\mathcal{G}^{\mathbb{C},*m}\) denote either an m-C-DMG or a cm-C-DMG, and let
\(\compatible{\mathcal{G}^{\mathbb{C},*m}}\) be the set of variable-level
m-ADMGs compatible with it. Then any application of a rule of do-calculus in
\(\mathcal{G}^{\mathbb{C},*m}\) is valid in every
\(\mathcal{G}\in\compatible{\mathcal{G}^{\mathbb{C},*m}}\).
\end{lemma}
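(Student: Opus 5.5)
The plan is to reduce the statement to the soundness of the do-calculus for C-DMGs established by \citet{Ferreira_2025}. The argument has three parts.

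First, view an m-C-DMG or cm-C-DMG \(\mathcal{G}^{\mathbb{C},*m}\) as a C-DMG over an enlarged partition. Take the partition of \(\mathbb{V}\cup\mathbb{R}\cup\mathbb{V}^\ast\) whose blocks are:
\begin{itemize}
\item the substantive clusters \(C\in\mathbb{C}\);
\item the missingness blocks, namely the singletons \(\{R_X\}\) in the m-C-DMG case and the sets \(\{R_X : X\in C\}\) in the cm-C-DMG case;
\item the proxy singletons \(\{X^\ast\}\).
\end{itemize}
Each \(\mathcal{G}\in\compatible{\mathcal{G}^{\mathbb{C},*m}}\) is an ADMG over these variables. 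I would check that compatibility with \(\mathcal{G}^{\mathbb{C},*m}\), in the sense of the two definitions, is exactly compatibility of \(\mathcal{G}\) with this C-DMG in the sense of \citet{Anand_2023,Ferreira_2025}. That is, an edge or bidirected edge exists between two blocks iff some pair of their members is so connected in \(\mathcal{G}\). This step is mostly bookkeeping. Because \(X^\ast\) is a deterministic function of \(X\) and \(R_X\), I also need to note that the m-ADMG factorization, with the proxy as a child of \(X\) and \(R_X\), is an ordinary structural causal model. So the C-DMG results apply to it without change.

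Second, fix a rule application in \(\mathcal{G}^{\mathbb{C},*m}\) whose sets \(\mathbb{C}_{\mathbb{Y}},\mathbb{C}_{\mathbb{X}},\mathbb{C}_{\mathbb{Z}},\mathbb{C}_{\mathbb{W}}\) are unions of blocks. Interventions are on substantive clusters, and conditioning may include \(R\)-blocks and proxies. I then invoke the macro-level soundness theorem of \citet{Ferreira_2025}: if the d-separation holds in the mutilated C-DMG, then the corresponding variable-level d-separation holds in the correspondingly mutilated \(\mathcal{G}\). The reason is that any active variable-level path projects to a walk in the cluster graph, and its primary path (Definition~\ref{def:primary_path}) is active in the mutilated cluster graph. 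Mutilation commutes with projection: removing incoming edges into the members of a block at variable level corresponds to removing incoming edges into the block at cluster level. The standard do-calculus for the ADMG \(\mathcal{G}\) (Pearl) then yields the equality of interventional distributions in every compatible model.

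Third, treat Rule 3 separately, which I expect to be the main obstacle. The set \(\mathbb{C}_{\mathbb{X}}(\mathbb{C}_{\mathbb{W}})\) is defined via non-ancestorship in the cluster graph, whereas the variable-level rule uses \(\mathbb{X}(\mathbb{W})\) in \(\mathcal{G}\). I would argue as follows. Every variable in a cluster that is a non-ancestor of \(\mathbb{C}_{\mathbb{W}}\) at cluster level is a non-ancestor of \(\mathbb{W}\) at variable level, because ancestral relations project. Hence the cluster-level mutilation removes a subset of the edges removed at variable level, and it can only make d-separation harder. So d-separation in \(\mathcal{G}^{\mathbb{C},*m}_{\overline{\mathbb{C}_{\mathbb{Z}}}\overline{\mathbb{C}_{\mathbb{X}}(\mathbb{C}_{\mathbb{W}})}}\) implies it in the variable-level graph mutilated by \(\overline{\mathbb{Z}}\,\overline{\mathbb{X}(\mathbb{W})}\). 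Care is needed with clusters that are partially ancestral. For these I would use the fact that retaining more incoming edges into members of \(\mathbb{X}\) only adds paths, so the implication still goes in the right direction. Lastly, the self-loops and cycles permitted in \(\mathcal{G}^{\mathbb{C},*m}\) are handled by the primary-path construction, as in \citet{Ferreira_2025}.
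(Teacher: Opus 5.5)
Your proposal is correct and follows the same route as the paper. Both treat the m-C-DMG/cm-C-DMG as a C-DMG, use the soundness of d-separation from \citet{Ferreira_2025} to carry each rule's separation condition down to every compatible m-ADMG, and then apply the standard variable-level do-calculus; your version additionally spells out the enlarged partition, checks that deterministic proxies preserve the Markov property, and resolves the Rule 3 mismatch between $\mathbb{C}_{\mathbb{X}}(\mathbb{C}_{\mathbb{W}})$ and $\mathbb{X}(\mathbb{W})$ by subgraph monotonicity, all of which the paper's short proof leaves implicit.
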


\begin{proof}
Both m-C-DMGs and cm-C-DMGs are C-DMGs whose vertices represent either clusters of
substantive variables, missingness indicators, missingness-indicator clusters,
or proxy variables. Hence the soundness of do-calculus in C-DMGs established by
\citet{Ferreira_2025} applies directly.

More precisely, if a rule of do-calculus is applicable in
\(\mathcal{G}^{\mathbb{C},*m}\), then the corresponding d-separation condition
holds in the relevant mutilated cluster graph. By the soundness result of
\citet{Ferreira_2025}, this d-separation condition entails the corresponding
d-separation condition in every compatible variable-level m-ADMG. The standard
soundness of do-calculus at the variable level then implies that the associated
do-calculus equality is valid in every
\(\mathcal{G}\in\compatible{\mathcal{G}^{\mathbb{C},*m}}\).

Therefore, any do-calculus transformation licensed by
\(\mathcal{G}^{\mathbb{C},*m}\) is sound for all compatible variable-level
missingness graphs.
\end{proof}

Now we proceed to the proof of the theorem.
\begin{proof}
Let \(\compatible{\mathcal{G}^{\mathbb{C},*m}}\) denote the set of variable-level
m-ADMGs compatible with \(\mathcal{G}^{\mathbb{C},*m}\). Suppose that the query
\(Q\) can be transformed into an expression \(F\) using do-calculus, probability
manipulations, and Equation~\ref{eq:from_missing_to_not_missing}, where \(F\)
contains no do-operators and involves only observable quantities.

By Lemma~\ref{lemma:do-calculus_missingness}, each do-calculus step
performed in \(\mathcal{G}^{\mathbb{C},*m}\) is sound in every variable-level
m-ADMG
\[
\mathcal{G}\in \compatible{\mathcal{G}^{\mathbb{C},*m}}.
\]
Moreover, the probability manipulations are algebraic identities and are
therefore valid in every compatible model. Equation~\ref{eq:from_missing_to_not_missing}
is the consistency relation defining the proxy variables, and is also valid in
every compatible missingness model.

Thus, the whole sequence of transformations from \(Q\) to \(F\) is valid in
every variable-level m-ADMG compatible with
\(\mathcal{G}^{\mathbb{C},*m}\). Hence, for every such compatible model,
\[
Q = F.
\]

Since \(F\) contains no do-operators and involves only observable quantities, it
is a functional of the observed-data distribution. Therefore, \(Q\) is uniquely
determined from the observed-data distribution for all variable-level m-ADMGs
compatible with \(\mathcal{G}^{\mathbb{C},*m}\).

By definition, \(Q\) is recoverable from
\(\mathcal{G}^{\mathbb{C},*m}\).
\end{proof}



\end{document}